\newtheorem{theorem}{Theorem}[section]
\newtheorem{lemma}[theorem]{Lemma}
\newtheorem{proposition}{Proposition}[section]
\theoremstyle{definition}
\newtheorem{definition}[theorem]{Definition}
\newtheorem{example}[theorem]{Example}
\newtheorem{corollary}[theorem]{Corollary}
\theoremstyle{remark}
\newtheorem{remark}[theorem]{Remark}
\numberwithin{equation}{section}
\title{Block Codes on Pomset Metric}
\author{Atul Kumar Shriwastva}
\address{{Department of Mathematics, National Institute of Technology Warangal, Hanamkonda, Telangana 506004, India}}
\email{shriwastvaatul@student.nitw.ac.in}
\thanks{}
\author{R. S. Selvaraj}
\address{{Department of Mathematics, National Institute of Technology Warangal, Hanamkonda, Telangana 506004, India}}
\email{rsselva@nitw.ac.in}
\subjclass[2010]{Primary: 94B05, 06A06; Secondary:  15A03}	
\keywords{Multiset, Pomset Codes, Lee weight, Poset block codes,  Weight distribution, Perfect codes, MDS codes}   
\date{\today}
\begin{document}

\begin{abstract}
  Given a regular multiset $M$ on $[n]=\{1,2,\ldots,n\}$, a partial order $R$ on $M$,  and a label map  $\pi : [n] \rightarrow \mathbb{N}$ defined by $\pi(i) = k_i$ with $\sum_{i=1}^{n}\pi (i) = N$, we define a pomset block metric $d_{(Pm,\pi)}$ on the direct sum $ \mathbb{Z}_{m}^{k_1} \oplus \mathbb{Z}_{m}^{k_2} \oplus \ldots \oplus \mathbb{Z}_{m}^{k_n}$ of  $\mathbb{Z}_{m}^{N}$ based on the pomset $\mathbb{P}=(M,R)$. The  pomset block metric extends the classical pomset metric introduced  by I. G. Sudha and  R. S. Selvaraj and generalizes the poset block metric introduced by M. M. S. Alves et al over $\mathbb{Z}_m$.	
  The space  $ (\mathbb{Z}_{m}^N,~d_{(Pm,\pi)} ) $ is called the pomset block space and we determine the complete weight distribution of it. Further, $I$-perfect pomset block codes for ideals with partial and full counts are described. 
  Then, for block codes with chain pomset, the packing radius and Singleton bound are established. The relation between MDS codes and $I$-perfect codes for any ideal $I$ is investigated. Moreover, the duality theorem for an MDS pomset block code is established when all the blocks have the same size.
\end{abstract}
	\maketitle
\section{Introduction}	
The major problem of coding theory is to find the largest minimum distance $d$ of any $k$-dimensional linear code of length $n$ for any integer $ n > k \geq 1$. 
Numerous researchers have worked on the structure of posets and defined various spaces such as poset space \cite{Bru} including crown space \cite{crown} and hierarchical poset space \cite{hierarchical}, RT space \cite{rt} including NRT space \cite{nrt classification} and so on. They investigated the properties such as packing radius, Singleton bound, maximum distance separability, weight distribution, and perfectness of codes with those spaces. 
Over the past two decades, the study of block codes has sparked several significant developments in the communication field, such as experimental design, high-dimensional numerical integration, and cryptography.
In 2006,  $\pi$-block codes of length $N$ over $\mathbb{F}_q$ were introduced by K. Feng et al. \cite{fxh}, which is another generalization of codes with Hamming  metric.  
Further, it is extended to $(P,\pi)$-block codes by M. M. S. Alves et al. \cite{Ebc}, by introducing a partial order relation on the block positions $[n]$ of an $N$-tuple in $\mathbb{F}_q^N$ consisting of $n$-blocks in each $N$-tuple, thereby providing an extension to the poset codes \cite{Bru} as well. Thus, it kept the researchers to concentrate on the exploration of block codes with various metrics, allowing one to study the class of posets (such as hierarchical posets, NRT posets, etc.) that admit certain properties of the block codes (\cite{Ebc},\cite{bk},\cite{bkd},\cite{bkdnsr},\cite{fxh},\cite{hkmdspc},\cite{nrt classification},\cite{MacAdmitting classification}). This has emerged as a useful research topic in the digital communication world. 
  \par A space equipped with pomset metrics is a recently introduced one by I. G. Sudha, R. S. Selvaraj \cite{gsrs}, which is a generalization of Lee space \cite{Leecode}, in particular, and poset space \cite{Bru}, in general, over $\mathbb{Z}_m$.
  Further, the MacWilliam type identities were determined  \cite{gsr},  maximum distance separability, and perfectness of  codes \cite{gr} were studied on codes with pomset metric. A pomset is a partially ordered multiset, and for more information, one can see \cite{BWD},  \cite{GKJ}, \cite{GK} and \cite{gsrs}. Of late, L. Panek \cite{wcps} introduced the notion of weighted coordinates poset metric, a general form of the pomset metric without using the multiset structure. 
  But, as the notion of ideals $I$ with partial count and those with full count in pomsets affect the study of $I$-perfect codes and hence MDS codes, it is felt by the authors to pursue the exploration of block codes by retaining the flavour of multiset set up. 
  \par In poset space or poset block spaces, the $I$-balls are linear, which is not the case with pomset spaces. There are two kinds of ideals in pomsets \cite{gr}: ideals with full count and ideals with partial count. So, we extend the concept of $I$-balls ($I$ is an ideal in the pomset) to the pomset block space $\mathbb{Z}_m^N$ to study the $I$-perfect pomset codes. We will see that if $I$ is an ideal with a full count then the $I$-ball is linear, whereas this is not the case for ideals with a partial count. Thus, there is a need to explore the properties of $I$-balls for ideals with a  partial count. Further, we investigate the $I$-perfect pomset block codes for both ideals with partial count and full count  (see Theorem \ref{I-perfect full pomset code} and \ref{I-perfect partial pomset code }). However, these $I$-perfect codes are more comfortable to study than the $r$-perfect codes because the $I$-balls  are submodules of $\mathbb{Z}_m^N$. 
\par This paper introduces the pomset block metric for codes of length $N $ over the ring $\mathbb{Z}_m$. Section ${2}$ provides the preliminaries and basic results on multisets, relations on multisets, and pomset. In Section ${3}$, we define the $(Pm, \pi)$-metric (or pomset block metric) on $\mathbb{Z}_{m}^{N}$. Further, we define $r$-balls, $I$-balls and explore their structures. We determine the complete weight distribution of $(Pm,\pi)$-space. 
An $I$-perfect pomset block code for an ideal  $I$ with full count is found.
 But, an $I$-perfect pomset block code $\mathbb{C}$ for an ideal $I$ with a partial count is constructed by imposing certain conditions on the subset $\mathbb{C}$  of $\mathbb{Z}_m^{N}$. Section ${4}$ completely deals with block codes with chain pomsets. There we obtain the packing radius and Singleton bound for pomset block codes. Then the relationship of MDS poset block codes with MDS pomset block codes is found. Moreover, the relationship of MDS $(Pm, \pi)$-codes with $I$-perfect $(Pm, \pi)$-codes is studied and the duality theorem for such $(Pm, \pi)$-codes when all blocks have the same length is established. 
\section{Preliminaries}

A multiset $M$  (in short, mset) is a collection of elements that may contain duplicates. These elements are drawn from a set $A$ and the multiset $M$ is described through a count function
 $C_M : A  \rightarrow \mathbb{N}$ such that $C_M (a) = c$ for $a \in A$.
  For example, if $A=\{ a_1, a_2, \ldots, a_n\}$ then $ M=\{c_1/a_1, c_2/a_2, \dots, c_n/a_n \}$ is a multiset drawn from the set $ A $ where  $ c_i/a_i \in M $  represents an element $ a_i \in A $ appearing $ c_i $ times in $M$. 
The cardinality of a multiset $M$ is  $|M| = \sum\limits_{a \in A} C_M (a) $. For a $h \in \mathbb{N}$, if $c_i=h$ for every $i$  then $M$ is called a regular mset with height $h$.  Given a multiset $M$, we call the set $ M^{*} \triangleq \{i \in A : C_M (i) > 0 \}$ to be the root set of $M$. 
Let $M_1$ and $M_2$ be two msets drawn from the set $A$. If $C_{M_1} (a) \leq C_{M_2} (a) $ for all $a \in A$, then $M_1$ is called as submultiset (in short, submset) of $M_2$ $(M_1 \subseteq M_2)$ and in addition, if $C_{M_1} (a) < C_{M_2} (a)$ for an $a \in M_2^*$ then $M_1$ is said to be a proper submset of $M_2$ $(M_1 \subset M_2)$. $M_1 = M_2$ iff $ M_1 \subseteq M_2$ and $M_2 \subseteq M_1$. Union and intersection of multisets $M_1$ and $M_2$ are defined as:
  $ M_1 \cup M_2 \triangleq \{ C_{M_1 \cup M_2} (a)/a : C_{M_1 \cup M_2} (a) =  \max\{C_{M_1} (a), C_{M_2} (a)\} \text{ for all } a \in A\}$ and
    $  M_1 \cap M_2 \triangleq \{ C_{M_1 \cap M_2} (a)/a : C_{M_1 \cap M_2} (a) =  \min \{C_{M_1} (a), C_{M_2} (a)\} \text{ for all } a \in A\}$, respectively. 
\par    The mset space $[A]^r$ is the set of all multisets $M$ drawn from $A$ such that $C_{M} (a) \leq r $ for every $a \in A$. Let $M_1,M_2 \in [A]^r$, then their mset sum
$  M_1 \oplus M_2 \triangleq \{ C_{ M_1 \oplus M_2} (a)/a : C_{ M_1 \oplus M_2} (a) = \min \{C_{M_1} (a)+C_{M_2} (a), r\} \text{ for all } a \in A\}$. The mset difference of $M_2$ from $M_1 $  is 
$  M_1 \ominus M_2 \triangleq \{ C_{ M_1 \ominus M_2} (a)/a : C_{ M_1 \ominus M_2} (a) = \max \{C_{M_1} (a)-C_{M_2} (a), 0\} \text{ for all } a \in A\}$. 
\sloppy{Cartesian product:  $M_1 \times M_2 =  \{mn/(m/a,n/b) : m/a \in M_1 \text{ and }  n/b \in M_2\}$. The notation $k/(m/a, n/b) $ means that $a $ is appearing $ m $ times in $M_1$, $b$ is appearing $ n $ times in $M_2$ and the pair $(a, b)$ is appearing $k$ times in $M_1 \times M_2$ where $1 \leq  k \leq mn$. 
	The submset $R$ of $M \times M $ is said to be an mset relation on $M$
	if for every $(m/a,n/b) \in R$ one has $C_{R} (m/a,n/b) = C_{M} (a) \cdot C_{M} (b)$.  For $(m/a,n/b) \in R$, the  count of $a$ and  $b$ in the ordered pair $(m/a,n/b)$ is denoted as $C_{M} (a)=m $ and $ C_{M} (b) =n$ respectively. }
\par A mset relation $R \subseteq M \times M$ is said to be partially ordered mset  relation (in short, pomset relation) if it satisfies the following: (1) for every $p/a \in M$, $p/aR p/a$ (reflexive), (2)  $p/a R q/b$ and $q/b R p/a$ imply $p=q, a=b$ (anti-symmetric), and  (3) $p/a R q/b$ and $q/b R t/c$ imply $p/a R t/c$ (transitive).  The pair $(M, R)$ is called a partially ordered multiset (or pomset) and is denoted by $\mathbb{P}$. An element $q/b \in M$ is said to be a maximal element of $\mathbb{P}$ if there is no element $t/c \in M$ such that $q/b R t/c$. An element $r/a \in M$ is said to be a minimal element of $\mathbb{P}$ if there is no element $t/c \in M$ such that $t/c R r/a$. $\mathbb{P}$ is said to be a chain if every distinct pair of elements from $M$ is comparable in $\mathbb{P}$. $\mathbb{P}$ is said to be an antichain if no distinct pair of elements from $M$ is comparable in $\mathbb{P}$.
\par A submset $I$ of $M$ is called an  ideal of $\mathbb{P}$ if $p/a \in I$ and $q/b ~R ~ p/a~(b \neq a)$ imply $q/b \in I$. An ideal generated by an element $p/a \in M$ is defined as $\langle p/a \rangle = \{p/a\} \cup \{q/b \in M : q/b R p/a \} $. An ideal generated by a submset $I$ of $M$ is defined as $\langle I \rangle = \bigcup\limits_{p/a \in I} \langle p/a \rangle $.  An ideal $I$ is said to be of full count if $C_{I}(i)=C_{M}(i)$ for every $i \in I^*$; otherwise, it is said to be an ideal with a partial count. Throughout the paper, $\mathscr{I}(\mathbb{P})$ denotes the set of all ideals in $\mathbb{P}$
and $\mathscr{I}^t(\mathbb{P})$ denotes the set of all ideals in $\mathbb{P}$ with cardinality $t$.
\begin{example}\label{example}
   Let $M = \{ h/i_j : j \in [5] \}$ be a regular mset of height $h$ and let $ R = \{h^2/(h/i_j, h/i_j),  h^2/(h/i_1, h/i_j) : j \in [5] \} $ be a pomset relation defined on $M$. Then $P = (M,R) $ is a pomset. Some of the ideals of $P = (M,R) $ are: (a) $I_\ell= \langle \{h_1/i_\ell\} \rangle = \{h/i_1, h_1/i_\ell \} $ for any $l \in \{2,3,4,5\}$ and $h_1 \leq h$; (b) $I= \{ h/i_1,h/i_2,h_1/i_3 \}$; (c)  $J = \{h/i_1, h/i_2, h/i_3, h/i_4,h_1/i_5\}$. If $h_1 = h$ then $I_\ell$, $I$ and $J$ are ideals with full counts. If $h_1 < h$ then these are ideals of $\mathbb{P}$  with partial counts.
   	But, if we take $ R' =	\{h^2/(h/i_j, h/i_j),  h_2/(h/i_1, h/i_k) : j,k \in \{1,2,3,4,5 \}, k \neq 1 \text{ and } h_2 < h^2\} $ then $R'$ is not an mset relation because the counts of the ordered pairs $(h/i_1, h/i_k)$ are $h_2 < h^2$. Hence,	$	\mathbb{P} = (M,R') $ is not a  pomset.
\end{example}
\begin{proposition}\label{p2}
	Let $\mathbb{P} = (M, R)$ be a pomset where $M \in [A]^r$. If  $I$, $J \in \mathscr{I}(\mathbb{P})$, then the following holds: 
	\begin{enumerate}[label=(\roman*)]
	\item   $\langle I \cup J \rangle = \langle I \rangle \cup \langle J \rangle$ and $\langle I \cap J \rangle = \langle I \rangle \cap \langle J \rangle$.
	\item\label{p2b}  $\langle I \oplus J \rangle \subseteq \langle I \rangle \oplus \langle J \rangle $ if M is a regular mset with height $r$. 
	\end{enumerate}
\end{proposition}
\begin{proposition}\label{J in I will exist}
	Let $\mathbb{P} = (M, R)$ be a pomset and $0 \leq m \leq n \leq |M|$. 
	Then  for each $I \in \mathscr{I}^n(\mathbb{P})$ there exists a $J \in \mathscr{I}^m(\mathbb{P}) $ such that $J \subseteq I$. Moreover, for each $J \in \mathscr{I}^m(\mathbb{P})$ there exists an $I \in \mathscr{I}^n(\mathbb{P}) $ such that $I \supseteq J$.
\end{proposition}
 \begin{definition}[Dual Pomset]
 	For a pomset $ \mathbb{P}  = (M,R)$,  its dual pomset $\tilde{\mathbb{P} }= (M, \tilde{R})$ is defined with the same underlying mset $M $ of height $r$ such that $p/a \tilde{R} q/b $ in $ \tilde{\mathbb{P}}$ if and only if $q/b R p/a$ in  $ \mathbb{P} $. As a result, the order ideals of $\tilde{\mathbb{P} }$ are precisely the complements of the order ideals of $\mathbb{P}$, that is, $\mathscr{I}(\tilde{ \mathbb{P} }) = \{I^c : I \in \mathscr{I}({\mathbb{P} }) \}$, where the compliment
   of $I$ with respect to  $M$ is $  I^c \triangleq \{ C_{I^c} (i)/i : C_{I^c} (i) = r - C_{I}(i) \text{ for all } i \in M^* \}$.
 \end{definition}
\begin{example}
   For the pomset  $ \mathbb{P}$ given in Example \ref{example}, the dual is $\tilde{\mathbb{P} }= (M, \tilde{R})$ with the multiset relation $\tilde{R} = \{h^2/(h/i_j, h/i_j),  h^2/(h/i_k, h/i_1) : j,k \in \{1,2,3,4,5 \} $. The maximal elements of the pomset $\mathbb{P}$ viz., $ h/i_2, h/i_3,h/i_4, \text{ and } h/i_5$ are the minimal elements of its dual $\tilde{\mathbb{P} }$. 
\end{example}
\section{Pomset Block Metric Space ($(Pm,\pi)$-space)}
For the regular multiset $ M=\{\lfloor \frac{m}{2} \rfloor/i : i \in [n]\}$ of height $ \lfloor \frac{m}{2} \rfloor$ drawn from $ [n] $ with a partial order $R$, the pair  $ \mathbb{P} = (M,R)$ is a pomset. For any positive integer $k$ and $ x=(x_{1}, x_{2}, \dots, x_{k}) \in  \mathbb{Z}_{m}^{k}$, the support of $ x $ with respect to Lee weight is defined to be the multiset $ supp_{L}(x)=\{c/i : x_i \neq 0,  c=w_{L}(x_{i}) \}$  
where $ w_{L}(x_{i})=\min\{x_{i}, m-x_{i}\} $ is the Lee weight of $ x_{i} \in  \mathbb{Z}_{m} $.
 Let  $\pi$ be a label map from $[n] $ to $ \mathbb{N}$  defined by $\pi(i) = k_i$ with $\sum_{i=1}^{n}\pi (i) = N$ and consider the space $ \mathbb{Z}_m^N $ as the direct sum of modules $ \mathbb{Z}_{m}^{k_1}, \mathbb{Z}_{m}^{k_2}, \ldots, \mathbb{Z}_{m}^{k_n} $; that is,  $\mathbb{Z}_{m}^{N} = \mathbb{Z}_{m}^{k_1}  \oplus \mathbb{Z}_{m}^{k_2} \oplus \ldots \oplus \mathbb{Z}_{m}^{k_n} $.  Every $N$-tuple  $v$ in $ \mathbb{Z}_{m}^{N} $ is expressed uniquely as $	v = v_1 \oplus v_2 \oplus \ldots \oplus v_n $ where
   $v_i = (v_{i_1},v_{i_2},\ldots,v_{i_{k_i}}) \in \mathbb{Z}_{m}^{k_i}$.  
  For $v_i \in \mathbb{Z}_{m}^{k_i} $, if $ Max_c supp_{L}(v_i) \triangleq \max\limits_{i_j \in supp_L(v_i)^*} \{w_{L}(v_{i_j})\}  $ denote  the maximum among the Lee weights of the components of $v_i$, we define the pomset block support or $(Pm,\pi)$-support of  $ v  \in \mathbb{Z}_{m}^{N} $  as 
\begin{align*}
	supp_{(Pm,\pi)}(v) = \{ c_i/i \in M : v_i \neq 0 ~\text{and} ~c_i = Max_c supp_{L}(v_i)   \}
\end{align*}
   a submset of $M$. 
  \par  Subsequently, the pomset block weight or $ (Pm,\pi) $-weight of $v$ is  $ w_{(Pm,\pi)}(v) \triangleq |\langle supp_{(Pm,\pi)}(v) \rangle| $ and the pomset block distance or $ (Pm,\pi) $-distance between  $ u, v \in \mathbb{Z}_{m}^N $ is    $d_{(Pm,\pi)}(u,v)  \triangleq  w_{(Pm,\pi)}(u - v)$.
   Now,  the $ (Pm,\pi) $-distance is indeed  a metric on $ \mathbb{Z}_{m}^N $ as given in:
\begin{theorem}\label{t1}
	The pomset block distance $d_{(Pm,\pi)}(.,.)$ is  a metric on $\mathbb{Z}_{m}^N $.
\end{theorem}
\begin{proof}
	\sloppy{Let $ u, v, w \in \mathbb{Z}_{m}^N $. As $|supp_{(Pm,\pi)}(u-v)| \geq 0 $ so is  $d_{(Pm,\pi)}(u,v)  \geq 0 $. 
	Clearly, $d_{(Pm,\pi)}(u, v) = 0 $ iff $u=v$.  
	As $ w_L(-v_{i_\ell}) = w_L(v_{i_\ell}) $ for each $i$ and $\ell$, so is 
	 $supp_{L}{(-v)} = supp_{L}{(v)}$ and hence $ w_{(Pm,\pi)}(-v) = w_{(Pm,\pi)}(v) $. Thus, $ d_{(Pm,\pi)}(u, v) = d_{(Pm,\pi)}(v, u)$. 
	Since $Max_c supp_{L}(u_i + v_i) \leq Max_c supp_{L}(u_i) + Max_c supp_{L}(v_i) $,  then $supp_{(Pm, \pi)}(u + v) \subseteq supp_{(Pm, \pi)}(u) \oplus supp_{(Pm,\pi)}(v) $. It follows  from the Proposition \ref{p2}\ref{p2b} that, $\langle supp_{(Pm, \pi)}(u + v) \rangle \subseteq  \langle supp_{(Pm, \pi)}(u) \oplus supp_{(Pm, \pi)}(v) \rangle \subseteq \langle supp_{(Pm, \pi)}(u) \rangle \oplus  \langle supp_{(Pm, \pi)}(v) \rangle  $. Thus, $ w_{(Pm,\pi)}(u + v) = |\langle supp_{(Pm, \pi)}(u + v) \rangle| \leq w_{(Pm,\pi)}(u) + w_{(Pm,\pi)}(v)$. Hence, $d_{(Pm,\pi)}(u,v) \leq d_{(Pm,\pi)}(u,w) + d_{(Pm,\pi)}(w, v)$.}
\end{proof}
The metric $d_{(Pm,\pi)}(.,.)$ on $ \mathbb{Z}_{m}^N $ is called as the pomset block metric or $ (Pm,\pi) $-metric. The pair $ (\mathbb{Z}_{m}^N,~d_{(Pm,\pi)} ) $ is said to be a pomset block space (or $(Pm,\pi) $-space). 	
\par If $k_i = 1 $ for every $i\in [n]$ then  $Max_c supp_{L}(v_i)=w_{L}(v_i)$ and  
$		supp_{(Pm,\pi)}(v)  = \{ k/i \in M : v_i \neq 0 ~\text{and} ~k = w_{L}(v_i)\} 	= supp_{L}(v)$. Here, the pomset block weight becomes pomset weight and the  pomset block space  becomes the  pomset space \cite{gsrs}. 
\begin{example} \label{ex1}
	Consider $\mathbb{Z}_{7}^{18}=\mathbb{Z}_{7}^2 \oplus \mathbb{Z}_{7}^3 \oplus \mathbb{Z}_{7}^4 \oplus \mathbb{Z}_{7}^4 \oplus \mathbb{Z}_{7}^3 \oplus \mathbb{Z}_{7}^2 $ with $N=18$ and $n=6$.  Let $\preceq $ be a partial order relation  on $[6]=\{1,2,3,4,5,6\}$ such that $3/1 \preceq 3/2 \preceq 3/4$; $3/5 \preceq 3/6 $. Then the set 
  $	A=	\{x=x_1 \oplus x_2 \oplus \bar{0} \oplus 0101 \oplus x_5 \oplus 20: x_1 \in \mathbb{Z}_{7}^2,  ~x_2 \in \mathbb{Z}_{7}^3, ~x_5 \in \mathbb{Z}_{7}^3  \}$ 
	gives a set of vectors in $\mathbb{Z}_{7}^{18}$ in which $(Pm,\pi)$-weight of any $x \in A$ is  $ |\langle supp_{(Pm,\pi)}(x) \rangle|= |\langle \{1/4,2/6\} \rangle|=3+3+1+3+2=12$. 
\end{example}
\par Any subset $\mathbb{C} $ of a $(Pm,\pi) $-space is said to be a pomset block code or a $(Pm,\pi)$-code. The minimum distance of $\mathbb{C} $ is $d_{(Pm,\pi)}\mathbb{(C)} \triangleq \min \{  d_{(Pm, \pi)} {(c_1, c_2)}: c_1, c_2 \in \mathbb{C} \}$. 
If $\mathbb{C}$ is linear, 
$\centering d_{(Pm,\pi)}\mathbb{(C)} = min \{ w_{(Pm,\pi)}(c) : 0 \neq c \in \mathbb{C} \} $.
As  $ w_{(Pm,\pi)}(v) \leq n \lfloor \frac{m}{2} \rfloor $ for any $v \in \mathbb{Z}_{m}^N $, the  minimum distance of  any $(Pm,\pi)$-code $\mathbb{C} $ is bounded above by $ n \lfloor \frac{m}{2} \rfloor $. 
\subsection{Balls and $I$-perfect codes in $(Pm,\pi)$-space }
\par 
 The pomset block ball or $(Pm,\pi)$-ball with center $ u \in \mathbb{Z}_{m}^N $ and radius $r$ (or $r$-ball centered at $u$) is defined as
 $B_{(Pm,\pi)}(u, r) = \{ v \in \mathbb{Z}_{m}^N : d_{(Pm,\pi)}(u, v) \leq r \}$ whereas an	$r$-sphere centered at $u$ is defined as $S_{(Pm,\pi)}(u, r) = \{ v \in \mathbb{Z}_{m}^N : d_{(Pm,\pi)}(u, v) = r \}$. If $u=0$, they will also be denoted as $B_r$ and $S_r$ respectively.

 \par	Given a pomset $\mathbb{P}=(M,R)$ on a regular multiset $ M$ of height $ \lfloor \frac{m}{2} \rfloor$ drawn from  $[n]$, we define a poset $P=([n], \preceq_P)$ \textit{induced by  $\mathbb{P}$} to be such that $ i \preceq_P j$ in $P$ iff $ p/i R q/j $ in  $\mathbb{P}$.

\par It is easy to see that if $I$ is an ideal in pomset $\mathbb{P}$ then $I^*$ is an ideal in the poset $P$ induced by $\mathbb{P}$.
 Now, for a $u \in \mathbb{Z}_{m}^N $, its $\pi$-support is $ supp_{\pi}(u) =\{ i \in [n] : u_i \neq 0\}$ and for an ideal $I^*$ in $P$, the $I^*$-ball centered at $u$ with respect to $(P,\pi)$-metric is $B_{I^*}(u) \triangleq \{ v \in \mathbb{Z}_{m}^N :  supp_{\pi}(u- v)  \subseteq I^* \}$. The $r$-ball centered at $u$ with respect to $(P,\pi)$-metric is  $B_{(P,\pi)}(u,r) \triangleq \{ v \in \mathbb{Z}_{m}^N :  d_{(P,\pi)}(u- v)  \leq r \}$.  Let $\mathcal{I}^r(P)$ denotes the set of all ideals in $P$ of cardinality $r$. Let $ d_{(P,\pi)}(\mathbb{C})$ be the minimum distance of a code $\mathbb{C} $ of length $N$ over $\mathbb{Z}_m$ with respect to $(P,\pi)$-metric. 
 \par 
 Since $ |\langle supp_{\pi}(v) \rangle| \leq w_{(Pm,\pi)}(v) \leq  \lfloor\frac{ m }{2}\rfloor  |\langle supp_{\pi}(v) \rangle| $ $\forall$ $v \in \mathbb{Z}_m^N $, we have $ d_{(P,\pi)}(u,v) \leq d_{(Pm,\pi)}(u,v) \leq  \lfloor\frac{ m }{2}\rfloor d_{(P,\pi)}(u,v) $ for all $u, v \in \mathbb{Z}_m^N $. 
 Then, $ d_{(P,\pi)}(\mathbb{C}) \leq d_{(Pm,\pi)}(\mathbb{C}) \leq  \lfloor\frac{ m }{2}\rfloor d_{(P,\pi)}(\mathbb{C}) $.
 Thus, we have the following:
\begin{proposition} \label{phg}
	Let $d_{(Pm,\pi)}(\mathbb{C})$  and  $d_{(P,\pi)}(\mathbb{C})$ be the minimum distances of a code $\mathbb{C} $ of length $N $ over $\mathbb{Z}_m$, with respect to $(Pm,\pi)$-metric and $(P,\pi)$-metric respectively.  Then $	\big\lfloor \frac{d_{(Pm,\pi)} \mathbb{(C)}-1}{ \big\lfloor\frac{ m }{2}\big \rfloor}\big\rfloor \leq d_{(P,\pi)}(\mathbb{C}) - 1$.	
\end{proposition}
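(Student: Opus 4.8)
The plan is to reduce the statement to a pointwise comparison of the two weight functions followed by an elementary floor-function manipulation. First I would show that for every $v \in \mathbb{Z}_m^N$ one has $w_{(Pm,\pi)}(v) \le \lfloor \frac{m}{2} \rfloor\, w_{(P,\pi)}(v)$. The reason is that the root set of the pomset ideal $\langle supp_{(Pm,\pi)}(v)\rangle$ coincides with the poset ideal of $P$ generated by the block support $\{\,i \in [n] : v_i \neq 0\,\}$; this is immediate from Definition \ref{poset induced}, which identifies comparabilities in $\mathbb{P}$ with comparabilities in $P$. Consequently $w_{(P,\pi)}(v)$ counts exactly the elements of that root set, while $w_{(Pm,\pi)}(v)$ is the sum of the multiplicities of those same elements inside the pomset ideal, and each such multiplicity is at most the height $\lfloor \frac{m}{2} \rfloor$ of $M$. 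Applying this to $v = u - w$ gives $d_{(Pm,\pi)}(u,w) \le \lfloor \frac{m}{2} \rfloor\, d_{(P,\pi)}(u,w)$ for all $u, w \in \mathbb{Z}_m^N$.

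Next I would pass to minimum distances. Choose distinct codewords $c_1, c_2 \in \mathbb{C}$ realizing the $(P,\pi)$-minimum distance, i.e. $d_{(P,\pi)}(\mathbb{C}) = d_{(P,\pi)}(c_1,c_2)$. Since $d_{(Pm,\pi)}(\mathbb{C})$ is an infimum over all pairs of distinct codewords, combining with the inequality from the previous step yields $d_{(Pm,\pi)}(\mathbb{C}) \le d_{(Pm,\pi)}(c_1,c_2) \le \lfloor \frac{m}{2} \rfloor\, d_{(P,\pi)}(c_1,c_2) = \lfloor \frac{m}{2} \rfloor\, d_{(P,\pi)}(\mathbb{C})$.

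Finally I would convert this into the asserted floor inequality. From $d_{(Pm,\pi)}(\mathbb{C}) \le \lfloor \frac{m}{2} \rfloor\, d_{(P,\pi)}(\mathbb{C})$ we get $d_{(Pm,\pi)}(\mathbb{C}) - 1 < \lfloor \frac{m}{2} \rfloor\, d_{(P,\pi)}(\mathbb{C})$, hence $\frac{d_{(Pm,\pi)}(\mathbb{C}) - 1}{\lfloor \frac{m}{2} \rfloor} < d_{(P,\pi)}(\mathbb{C})$; since $d_{(P,\pi)}(\mathbb{C})$ is an integer, taking floors on the left gives $\big\lfloor \frac{d_{(Pm,\pi)}(\mathbb{C}) - 1}{\lfloor \frac{m}{2} \rfloor} \big\rfloor \le d_{(P,\pi)}(\mathbb{C}) - 1$, as required.

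The only genuinely delicate point is in the second step: one must select the pair of codewords that is optimal for the $(P,\pi)$-distance rather than the pair optimal for the $(Pm,\pi)$-distance, otherwise the chain of inequalities runs in the wrong direction. The identification of the root set of the pomset-support-ideal with the poset-support-ideal is conceptually the heart of the first step, but it is a direct consequence of the induced-poset definition; everything after the pointwise weight bound is routine bookkeeping with the floor function.
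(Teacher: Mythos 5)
Your proof is correct and follows essentially the same route as the paper: establish $d_{(Pm,\pi)}(\mathbb{C}) \le \lfloor \frac{m}{2}\rfloor\, d_{(P,\pi)}(\mathbb{C})$ and then apply the floor-function manipulation. In fact you are more careful than the paper at the one delicate point you flag --- the paper selects the pair realizing the $(Pm,\pi)$-minimum and asserts the comparison ``clearly,'' whereas you correctly observe that the pointwise bound $w_{(Pm,\pi)}(v)\le \lfloor\frac{m}{2}\rfloor\, w_{(P,\pi)}(v)$ must be applied to a pair realizing the $(P,\pi)$-minimum.
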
 
 \par  Let $r= t\lfloor\frac{ m }{2}\rfloor+ s $, $1 \leq s \leq \lfloor\frac{ m }{2}\rfloor$ and $1 \leq t \leq n-1$. If $v \in B_{(P,\pi)}(u, t) $ then $ supp_{(P,\pi)}(u-v) \subseteq I$ for an $I \in \mathcal{I}^{t}(P)$. Then, $ w_{(Pm,\pi)}(u -v) \leq t \lfloor\frac{ m }{2}\rfloor$. Thus, we have 
 \begin{theorem}
 	For $x \in \mathbb{Z}_{m}^N $ and $1 \leq t \leq n-1$, $B_{(P,\pi)}(x, t) \subseteq B_{(Pm,\pi)}(x, t \lfloor\frac{ m }{2}\rfloor)$.
 \end{theorem}
 
\begin{definition}[{$r$-perfect $(Pm, \pi)$-code}]
	 A $(Pm, \pi)$-code $\mathbb{C}$ of length $N$ over ${\mathbb{Z}_m}$ is said to be $r$-perfect if the $r$-balls centered at the codewords of $C$ are pairwise disjoint and their union covers the entire space $\mathbb{Z}_{m} ^ N$.
\end{definition} 
\par For a submset $I$ in $\mathbb{P}$, the $I$-ball centered at $u$ is 
$B_{I}(u) \triangleq \{ v \in \mathbb{Z}_{m}^N :  supp_{(Pm,\pi)}(u- v)  \subseteq I \}$ and
 the $I$-sphere centered at $u$ is $S_{I}(u) \triangleq \{ v \in \mathbb{Z}_{m}^N :  supp_{(Pm,\pi)}(u-v) = I \}$.
For $v \in  B_I (u)$, it is not necessary that $\langle supp_{(Pm,\pi)} {(u - v)} \rangle \subseteq  I$. 
If the submset $I $ is an ideal in $\mathbb{P}$ then $\langle supp_{(Pm,\pi)} {(u - v)} \rangle \subseteq  I$ will always hold. Hence, for an ideal $I$ in $\mathbb{P}$, the $I$-ball centered at $u$ can also be defined as  
$B_{I}(u) \triangleq \{ v \in \mathbb{Z}_{m}^N : \langle supp_{(Pm,\pi)}(u- v) \rangle \subseteq I \}$ and $d_{(Pm,\pi)}(u,v) \leq |I|$ for each $ v \in B_{I}(u)$. Moreover, the $I$-sphere centered at $u$ is $S_{I}(u) \triangleq \{ v \in \mathbb{Z}_{m}^N : \langle supp_{(Pm,\pi)}(u-v) \rangle = I \}$. In short, the $B_I$ and $S_I$ denote the $I$-ball and $I$-sphere centered at $0 \in \mathbb{Z}_{m}^N$ respectively. 
 Let $ \mathscr{I}_j^i ( \mathbb{P} ) $ denote the collection of all ideals with cardinality $ i $ having $ j $ maximal elements. Then $
 \bigcup\limits_{j=1}^{\min\{i,n\}} \mathscr{I}_j^i (\mathbb{P})= \mathscr{I}^i{(\mathbb{P})}$. Let $Max(I) = \{c_{i_1}/i_1,c_{i_2}/i_2,\ldots,c_{i_j}/i_j \}$ be the submset of maximal elements in the ideal $I$.  
 \begin{proposition}
 	Let $I$ be a submset (or ideal) in $\mathbb{P}$ and $v \in \mathbb{Z}_m^N$. Then $S_{I}(v)$ is the translate (or coset) of $S_I$ containing $v$, that is, $S_I(v) = v + S_I$. Moreover, $B_I(v) = v + B_I$.
 \end{proposition}
 \begin{proof}
 	If $u \in S_I(v)$ then $ supp_{(Pm,\pi)} {(u - v)} = I$ so that $u - v\in S_I $ and thus, $u  = v + (u-v) \in v +S_I$. Let $y \in S_I$. Then  $ supp_{(Pm,\pi)} {(v - (v+y))} = 
 	supp_{(Pm,\pi)} {(-y)} = supp_{(Pm,\pi)} {(y)} =I$.
 	Thus, $  v + y \in S_I(v) $ for all $y \in S_I$. Hence, $S_I(v) = v + S_I$. Then, $B_I(v)=v + B_I$ follows.
 \end{proof}
 \begin{remark}
 	In a similar way as above, for $v \in \mathbb{Z}_m^N$, an $r$-sphere and an $r$-ball centered at $v$ can also be expressed as $S_{(Pm,\pi)}(v, r) = v + S_r$ and $B_{(Pm,\pi)}(v, r)= v + B_r$ respectively.
\end{remark}
\begin{proposition}
	Two vectors $u$ and $v$ belong to the same $I$-ball for some $I \in \mathscr{I}^r(\mathbb{P})$ if and only if $d_{(Pm,\pi)}(u,v) \leq r$.
\end{proposition} 
\begin{proposition}\label{runion ball}
	Every $r$-ball is the union of all $I$-balls where $I \in \mathscr{I}^r(\mathbb{P})$, that is,	$	B_{(Pm,\pi)}(u,r)= \bigcup\limits_{I \in \mathscr{I}^r(\mathbb{P})}B_{I}(u)$.
\end{proposition} 
Given an ideal $I$, for each $i_t \in Max(I)^*$ with $C_I(i_t)=C_{i_t}$, we set 
\begin{equation*}
	N_{i_t}  =  \left\{
	\begin{array}{ll}
		2   & \hbox{if $c_{i_t} < \frac{m}{2} $} \\
		1   & \hbox{if $c_{i_t} = \frac{m}{2} $.} 
	\end{array} \right.
\end{equation*}	
\begin{theorem}\label{S_I cardinality}
	Let $I$ be an ideal in $\mathbb{P}$. Then the cardinality of $I$-sphere centered at $ 0 \in \mathbb{Z}_{m}^N $ is 
	\begin{align*}
		|S_I | =\prod\limits_{i_t \in Max(I)^*} \delta(c_{i_t} )  \prod\limits_{l \in {I \setminus Max(I)}} m^{k_l}.
	\end{align*}
	where $c_{i_t} = C_I(i_t)$ and $\delta(c_{i_t} ) =  \left\{
	\begin{array}{ll}
		(2c_{i_t}+1)^{k_{i_t}} - (2c_{i_t}-1)^{k_{i_t}} & \hbox{if $c_{i_t} \neq \lfloor \frac{m}{2} \rfloor $ } \\
		m^{k_{i_t}} - (m-1)^{k_{i_t}}  & \hbox{if $c_{i_t} = \lfloor \frac{m}{2} \rfloor $.}
	\end{array} \right. $ 
\end{theorem}
\begin{proof}
	Let  $Max(I) = \{c_{i_1}/i_1,c_{i_2}/i_2,\ldots,c_{i_j}/i_j \}$.  Let  $S_I$ be the $I$-sphere centered at zero. 
	Choose $y =y_{1} \oplus y_{2} \oplus \ldots \oplus y_{n} \in \mathbb{Z}_{m}^N$ such that if $i_t \in Max(I)^*$ then $c_{i_t} \neq 0$ and $Max_c{supp_{L}{y_{i_t}}} =  c_{i_t}$; and if  $i_t \in I^* \setminus Max(I)^*$ then $y_{i_t}  \in \mathbb{Z}_{m}^{k_{i_t}}$; and  if $i_t \notin I^*$ then $y_{i_t} = 0$. Thus, $y \in S_I$. \\
	Case(1)  If $i_t \in  Max(I)^*$ is with full count $\lfloor \frac{m}{2} \rfloor$ in the ideal $I$, the total number of choices for such $y_{i_t}= (y_{i_{t_{1}}},y_{i_{t_{2}}},\ldots,y_{i_{t_{k_{i_t}}}})$ in  $\mathbb{Z}_{m}^{k_{i_t}}$ with $Max_c{supp_{L}{y_{i_t}}} = \lfloor \frac{m}{2} \rfloor$
	is \begin{align*}
		&= {k_{i_t} \choose 1}N_{i_t}(m - N_{i_t})^{k_{i_t}-1} +{k_{i_t} \choose 2}N_{i_t}^{2}(m - N_{i_t})^{k_{i_t}-2}+ \ldots  + {k_{i_t} \choose k_{i_t}}N_{i_t}^{k_{i_t}} \\
		&=m^{k_{i_t}}-(m - N_{i_t})^{k_{i_t}}
	\end{align*}
	Case(2) ~If $i_t \in  Max(I)^*$ is with partial count say $ c_{i_t} < \lfloor \frac{m}{2} \rfloor$ in the ideal $I$, the total number of choices for such $y_{i_t}= (y_{i_{t_{1}}},y_{i_{t_{2}}},\ldots,y_{i_{t_{k_{i_t}}}})$ in  $\mathbb{Z}_{m}^{k_{i_t}}$ with $Max_c{supp_{L}{y_{i_t}}} = c_{i_t} < \lfloor \frac{m}{2} \rfloor$
	is \begin{align*}
		&= {k_{i_t} \choose 1}2(2c_{i_t}-1)^{k_{i_t}-1} +{k_{i_t} \choose 2}2^{2}(2c_{i_t}-1)^{k_{i_t}-2}+ \ldots  + {k_{i_t} \choose k_{i_t}}2^{k_{i_t}} \\
		&=(2c_{i_t}+1)^{k_{i_t}}- (2c_{i_t}-1)^{k_{i_t}}
	\end{align*}
	Hence the result.
\end{proof}
Since $S_I(u)=u + S_I$, $|S_r|=\sum\limits_{j=1}^{min\{i, n\}} \sum\limits_{I \in {\mathscr{I}_j^r (\mathbb{P})} } |S_I|$, the following result follows from Theorem \ref{S_I cardinality}:
\begin{corollary}\label{S_r cardinality}
		Let $I$ be an ideal in $\mathbb{P}$. Then the cardinality of $I$-sphere centered at  $ u \in \mathbb{Z}_{m}^N $  is 
	\begin{align*}
		| S_{(Pm,\pi)}(u,r) | =  \sum_{j=1}^{min\{i, n\}} \sum_{I \in {\mathscr{I}_{j}^r(\mathbb{P})} }\prod\limits_{i_t \in Max(I)^*} \delta(c_{i_t})  \prod\limits_{l \in {I \setminus Max(I)}} m^{k_l}.
	\end{align*}
	where $c_{i_t} = C_I(i_t)$ and $\delta(c_{i_t} ) =  \left\{
\begin{array}{ll}
	(2c_{i_t}+1)^{k_{i_t}} - (2c_{i_t}-1)^{k_{i_t}} & \hbox{if $c_{i_t} \neq \lfloor \frac{m}{2} \rfloor $ } \\
	m^{k_{i_t}} - (m-1)^{k_{i_t}}  & \hbox{if $c_{i_t} = \lfloor \frac{m}{2} \rfloor $.}
\end{array} \right. $ 
\end{corollary}
  \par  From Proposition \ref{runion ball}, we have
\begin{equation*}
	 B_{(Pm,\pi)}(u,r)= \bigcup\limits_{I \in \mathscr{I}^r(\mathbb{P})}B_{I}(u) = \bigcup\limits_{r=1}^{n} \bigcup\limits_{I \in \mathscr{I}^r(\mathbb{P})} S_{I}(u).
\end{equation*}
 Since cardinality of an $r$-ball does not depend on its centre,  $	| B_{(Pm,\pi)}(0, r)| = 1 + \sum\limits_{j=1}^{r} |S_{(Pm,\pi)}(0, j) |= 1 + \sum\limits_{i=1}^{r}\sum\limits_{j=1}^{min\{i, n\}} \sum\limits_{I \in \mathscr{I}_{j}^i(\mathbb{P})} |S_I| $.  Thus, we have the following:
\begin{corollary}
	 The cardinality of $(Pm,\pi)$-ball centered at $ u \in \mathbb{Z}_{m}^N $ with radius $r$ is 
	\begin{align*}
		| B_{(Pm,\pi)}(u, r) | = 1 + \sum_{i=1}^{r} \sum_{j=1}^{min\{i, n\}} \sum_{I \in {\mathscr{I}_{j}^i} (\mathbb{P})}  \prod_{i_t \in Max(I)^*} \delta(c_{i_t} ) \prod_{l \in {I^* \setminus Max(I)^*}} m^{k_l}.
	\end{align*}
	where $c_{i_t} = C_I(i_t)$ and $\delta(c_{i_t} ) =  \left\{
\begin{array}{ll}
	(2c_{i_t}+1)^{k_{i_t}} - (2c_{i_t}-1)^{k_{i_t}} & \hbox{if $c_{i_t} \neq \lfloor \frac{m}{2} \rfloor $ } \\
	m^{k_{i_t}} - (m-1)^{k_{i_t}}  & \hbox{if $c_{i_t} = \lfloor \frac{m}{2} \rfloor $.}
\end{array} \right. $ 
\end{corollary}
\par 	 Now, the $(Pm,\pi)$-weight distribution of $ \mathbb{Z}_m^{N}$ is obtained as follows: 
	For each $1 \leq r \leq n\big\lfloor\frac{ m }{2}\big \rfloor$, let $A_r = \{u= u_{1} \oplus u_{2} \oplus \ldots \oplus u_{n} \in  \mathbb{Z}_m^{N}  : w_{(Pm,\pi)}(u)=r  \}$ and $A_0 = \{ \bar{0}\}$. Then, $u \in A_r$ iff $ u \in S_r$. Thus, we have:
	\begin{proposition}\label{A_r Cardinality pomset block}
	For any $1 \leq r \leq n \big\lfloor \frac{ m }{2} \big\rfloor $,  the number of $N$-tuples  $x \in \mathbb{Z}_{m}^N$ having $w_{(Pm,\pi)}(x)= r$  is $|A_r|=| S_r |$.
\end{proposition}
\subsection{Ideals with full count} 
In \cite{gr}, it was shown that unlike the results in poset space \cite{hkmdspc}, $I$-balls are no more linear subspaces of $\mathbb{Z}_m^n$ in the pomset space when the ideal $I$ is with a partial count. Thus, there is a need to investigate the properties of $I$-balls with respect to pomset block space as well.
  The following result is a block version to the result in  [\cite{gr} ref. Proposition $3$] and its proof follows the similar pattern. 
\begin{proposition} \label{full count theorem}
	Let $\mathbb{P} $ be a pomset on a regular multiset $ M$ and $\tilde{\mathbb{P} }$ be the dual pomset of $\mathbb{P} $. If $I$ is an ideal with full count in $\mathbb{P}$, then 
	\begin{enumerate}[label=(\alph*)]
		\item $B_I$ is a submodule of $\mathbb{Z}_m^N$ and $|B_I| = m ^{\sum\limits_{i \in I^*} k_i}$.
		\item  For $u \in \mathbb{Z}_m^N$, $B_{I}(u)$ is the coset of $B_I$ containing $u$, ie. $B_I(u) = u + B_I$. 
		\item   For $u,v \in \mathbb{Z}_m^N$, the two $I$-balls $B_I(u) $ and $B_I(v)$ are either identical or disjoint. Moreover,
		$	B_I(u)  =  B_I(v)  \text{ if and only if  } supp_{(Pm, \pi)}{(u-v)} \subseteq I$.
		\item\label{4d}  $B_{I,~\mathbb{P}}^\perp (0)= B_{I^c, \tilde{\mathbb{P}}}(0)$.
	\end{enumerate} 
\end{proposition}
	Hence,	$(\mathbb{Z}_m^N,~d_{(Pm,\pi)})$-space can be partitioned into $I$-balls for  an ideal  $ I $ with full count.
	If $t$ be the number of distinct $I$-balls in the partition of $\mathbb{Z}_m^N$, then $t = m^{N -\sum\limits_{i \in I^*}k_i}$. If $k_i=k$ for each $i \in [n]$, then $t = m^{(n - |I^*|)k}$.
 \begin{definition}
 	A $(Pm,\pi)$-code $\mathbb{C} \subseteq \mathbb{Z}_m^N$  is said to be an $I$-perfect code if the $I$-balls centered at the codewords of $\mathbb{C}$ are  pairwise disjoint and their union is $\mathbb{Z}_m^N$.
 \end{definition}
\begin{theorem} \label{I-perfect full pomset code}
	If $I$ is an ideal with full count in $\mathbb{P}$ then there exists an $I$-perfect pomset block code.
\end{theorem}
\begin{proof}
	By Proposition \ref{full count theorem},  the space $\mathbb{Z}_m^N$ can be partitioned into $I$-balls. The number of $I$-balls is  $ m^{N -\sum\limits_{i \in I^*}k_i}$. Then the set of $N$-tuples formed by picking exactly one $N$-tuple from each $I$-ball forms an $I$-perfect $(Pm,\pi)$-code $\mathbb{C}$ of length $N$ over  $\mathbb{Z}_m$ and  $ |\mathbb{C} | = m^{N -\sum\limits_{i \in I^*}k_i}$ with  $d_{(Pm,\pi)} (\mathbb{C}) > |I|$.  	
\end{proof}
\begin{lemma} 
	Let $I$ be an ideal with full count. If $\mathbb{C}$ is an $I$-perfect pomset block code of cardinality $m^s$ and  $k_i = k$ for all $i$ then $|I^*|= \frac{kn-s}{k}$. Moreover if $k = s$ then $|I^*|= n - 1$.
\end{lemma}
\begin{proof}
	Since $\mathbb{C}$ is an $I$-perfect pomset block code, $|\mathbb{C}||B_I| = m^{N}$ and $|B_I| = m^{k|I^*|}$ imply  that  $|I^*|= \frac{N-s}{k}$.
\end{proof}
\begin{theorem} \label{C I to C I perc code}
	Let $\mathbb{C} \subseteq \mathbb{Z}_m^N $ be a linear pomset block code of cardinality $m^s$ and $I$ be an ideal with full count in $\mathbb{P}$. Then, $\mathbb{C}$ is $I$-perfect if and only if $\mathbb{C^\perp}$ is $I^c$-perfect where $I^c$ is an ideal in $\tilde{\mathbb{P}}$. 
\end{theorem}
\begin{proof}
	Suppose that $\mathbb{C}$ is $I$ perfect. Let $B_{I,\mathbb{P}}$  be an $I$-ball centered at $0$ in $\mathbb{C}$, then $\sum\limits_{i \in I^*} k_i = N -s$. Since $I$ is an ideal with full count in $\mathbb{P}$, then $I^c$ is an ideal with a full count in $\tilde{\mathbb{P}}$. So, $\sum\limits_{i \in I^{c*}} k_i = s$. Let $B_{I^c,\tilde{\mathbb{P}}}$ be an $I^c$-ball centered  at $0$ in $\mathbb{C^\perp}$. Then $|B_{I^c,\tilde{\mathbb{P}}}| = m^s$. We have to prove that   $\mathbb{C^\perp}$ is $I^c$-perfect. It is enough to show that $|B_{I^c,\tilde{\mathbb{P}}} \cap \mathbb{C^{\perp}}| = 1$. Suppose that $  u \in B_{I^c,\tilde{\mathbb{P}} }\cap \mathbb{C^{\perp}}$. From Proposition \ref{full count theorem} \ref{4d}, $ u \in B_{I,\mathbb{P}}^{\perp}$ and thus $u \cdot x=0 ~\forall ~x \in B_{I,~\mathbb{P}}$. Since $\mathbb{C}$ is $I$-perfect each $y \in \mathbb{Z}_m^N $ occurs exactly in one $I$-ball, say $B_{I,\mathbb{P}}(c)$ for some $c \in \mathbb{C}$, so that  $y = c + x$ for some $x \in B_{I,\mathbb{P}}$. Thus, $u\cdot y=0$ for every $y \in \mathbb{Z}_m^N $ which means $u=0$. Hence,  $|B_{I^c,\tilde{\mathbb{P}}} \cap  \mathbb{C^{\perp}}| = 1$.  The converse follows in a similar fashion.
\end{proof}
\subsection{Ideals with a partial count}\label{section33}
Now, for an ideal $I$ with partial count, $I$-balls may not hold all the properties arrived at so far. Consider an ideal $I$ with partial count in the pomset $\mathbb{P}$. There exist an $ i \in I^* $  such that $0 < C_I(i) < \lfloor \frac{m}{2} \rfloor$. 
Let $A_p= \{i_1,i_2,\ldots,i_{I_p}\}$ be the  set  of  those  elements in $I^*$ that have partial count in $I$ and $A_f= \{j_1,j_2,\ldots,j_{I_f}\}$ be the set of those elements in $I^*$ that have full count in $I$. Thus, $I^* = A_p \cup A_f$ with $|A_p|=I_p$ and $|A_f|=I_f$. Let $c_{\ell}$ be the count of $\ell$ in $I $. Clearly, $1 \leq c_{i_s}   \leq  \lfloor \frac{m}{2}  \rfloor -1$ when $1 \leq s \leq  I_p$ and $c_{j_k} = \lfloor \frac{m}{2}  \rfloor$  for all $1 \leq k \leq  I_f$. 
Then,  the  $I$-ball $B_I$ consists of $N$-tuples of the form $v_{1} \oplus v_{2} \oplus \ldots \oplus v_{n}  \in \mathbb{Z}_{m}^N $ where $ v_i = (v_{i_1},v_{i_2},\ldots,v_{i_{k_i}}) \in \mathbb{Z}_{m}^{k_i} \text{ such that } v_{i_s} \in \{0,\pm1,\pm 2,\ldots,\pm c_{i_s} \} \text{ if } i \in A_p, ~ v_{i_s} \in  \mathbb{Z}_{m} \text{ if } i \in A_f \text{ and } v_{i_s} = 0 \text{ if } i \notin I^*$. 
Clearly,	$B_I$ is not a group with respect to addition of $N$-tuples and hence it is not a submodule of $\mathbb{Z}_m^N$.  Moreover, the following is hold concerning the cardinality of $B_I$:
 \begin{proposition}
 If	$I$ is an ideal with partial count in the pomset $\mathbb{P}$ then the cardinality of
 	 $I$-ball is 
 	 $
 	 |	B_I | =   m^{\sum\limits_{j \in A_{f}}k_{j} }\prod\limits_{i \in A_p}(1 + 2c_{i})^{k_{i}}$.	 
 \end{proposition} 
\begin{remark}
	Let $k_i=k$ $\forall$ $i$ and $I$ be an ideal with partial count in the pomset $\mathbb{P}$. Then the cardinality of	$I$-ball is 
	$	|	B_I | = m^{kI_f} \prod\limits_{i \in A_p}(1 + 2c_{i})^{k} $.	 
\end{remark} 
Since the $I$-balls centered at any two vectors are either disjoint or identical for an ideal $I$ with a full count, cosets of $B_I$ can partition the space $\mathbb{Z}_{m}^N$ into $I$-balls and one can determine $I$-perfect pomset block codes. But, in the case of an ideal $I$ with partial count, the $I$-balls need not be disjoint from one another.    
Translates of $B_I$ for such an ideal $I$ need not be disjoint always. We will identify if there exists any representative vectors $u,v \in \mathbb{Z}_{m}^N$ such that the translates $u+B_I$ and $v+B_I$ are disjoint, in the line of \cite{gr}. 
\subsection{Partition of  $\mathbb{Z}_m ^{k}$}\label{section34}
 Some results from \cite{gr} are restated here which will help to find the partition of $ \mathbb{Z}_m^k$ for any positive integer $k $.
  Let $t$ be an integer such that  $0 \leq t \leq \lfloor \frac{m}{2} \rfloor - 1$. 
Let $ S= \{a  \in \mathbb{Z}_m :  w_L{(a)} \leq t \}$ and $S' = \{a \in \mathbb{Z}_m :  w_L{(a)} > t \}$.  Thus, $S=\{0,\pm 1, \pm 2,\ldots,\pm t \}$, $|S| = 2t+1$ ($3 \leq 2 t + 1 \leq m-1 $), 
$S'=\{ t+1,  t+2,\ldots, m-t-1 \}$, $S \cap S'= \{\}$ and  $|S'| = m-2t-1$. If $a \in S \smallsetminus \{0\}$ then $a+S =-a+S$. Moreover, $2t+1 \mid  m-2t-1 $ if and only if $2t+1 \mid m$; that is,  $ |S| \mid |S'|$ if and only if $|S| \mid  m$.
\begin{proposition}[\cite{gr}] \label{S and a plus S}
	$S$ and $a +S$ are neither disjoint nor identical for any $a \in S \smallsetminus \{0\}$.
\end{proposition} 
 $S$ can be written as disjoint union of two subsets  $S_1 = \{0, 1,\ldots , t\}$, and $ S_2 = \{m-t, m-t+1,\ldots ,m-1 \}$. If $ 2t + 1 \mid m - 2t -1 $ and $\ell = \frac{m - 2t -1 }{2t+1}$, then $ m =(\ell + 1)(2t + 1)$. Clearly, $b \in S'$ iff $t < b < m-t$.  Hence,   $j(2t+1)+a \in S'$ for all $ a \in S$ whenever $ 1 \leq j  \leq \ell$.  
\begin{lemma}[\cite{gr}] \label{Translation of S}
	Let  $2t+1 \mid m-2t-1$ and $\ell = \frac{m-2t-1}{2t+1}$. Then the following hold: 
	\begin{enumerate}[label=(\roman*)]
		\item\label{2a}  For each $ i = 1,2, \ldots , \ell$, we have $a + i(2t + 1) (mod ~m) \notin S$ for any $a \in S$.
		\item\label{2b} For $i\neq j$, the translates $i(2t+1) + S$ and $ j (2t+1) + S$ are disjoint where $1 \leq i$, $j \leq \ell$. Moreover, $
		\dot{\bigcup\limits_{0\leq i \leq \ell }} i(2t+1)+S = \mathbb{Z}_m $.
		\item\label{2c}   $\langle 2t+1 \rangle$ is a submodule of $\mathbb{Z}_m$.
	\end{enumerate}
\end{lemma}
Hence,   whenever $2t+1 \mid m-2t-1$, there exist translates of $S$ that  partition  $\mathbb{Z}_m$.	
If $2t+1 \nmid   m-2t-1$ then the translates of $S$ will not form a partition of $\mathbb{Z}_m$. For further details, one can refer \cite{gr}.
\par 
Considering the findings above, we now define  $S^k= \{s = (s_{1},s_{2},\ldots,s_{k})\in \mathbb{Z}_m^{k} : s_{j} \in S \}$ and $S'^k = \{s' = (s'_{1},s'_{2},\ldots,s'_{k})\in \mathbb{Z}_m^{k} : s'_{j} \in S' \}$. As $S$ is not a subgroup of $ \mathbb{Z}_m$,  $S^k$ is not a subgroup of $ \mathbb{Z}_m^{k}$. Clearly, $|S^k|= (2t+1)^k$ and $|S'^k|= (m-2t-1)^k$.
\begin{proposition} 
	$S^k$ and $ x +S^k$ are neither disjoint nor identical for any $ x \in S^k \smallsetminus \{0\}$.
\end{proposition}
\begin{proof}
   Let $x = (x_{1}, x_{2},\ldots,x_{k}) \in S^k \smallsetminus \{0\}$, where $x_{i} \in S $.  Let 
   \begin{align*}
   	y =  (y_{1}, y_{2},\ldots,y_{k}) \text{ where }
   	y_i  =  \left\{
   	\begin{array}{ll}
   		- ( x_i - 1 ) + t & \hbox{if $1 \leq x_i \leq t$   } \\
   		- ( x_i + 1 )  - t & \hbox{if $-t \leq x_i \leq -1 $.}
   	\end{array} \right.
   \end{align*}
 Clearly,  $ y_i \in S$ and  $ x_i + y_i \in S'$ for each $i$. Thus,  $ y  \in  S^k $ and  $x + y  \in  x + S^k $. Moreover,  $ x + y  \in S'^k $ (as $ x_i + y_i  \in S' $) which means $ x + y  \notin S^k $.   Hence,  $S^k$ and $ x +S^k$  are not identical. Also, $x \in S^k $  and  $x \in  x + S^k $. Thus,	$S^k$ and $ x +S^k$ are neither disjoint nor identical for any $ x \in S^k \smallsetminus \{0\}$.
\end{proof}
  We have  $j(2t+1)+a \in S'$ for all $0 \neq a \in S$, $ 1 \leq j  \leq \ell $ whenever $2t+1 \mid m-2t-1$ and $ \ell = \frac{m - 2t -1 }{2t+1}$. 
  Now, we define $T =\{j(2t+1) : 0 \leq j \leq \ell \}$  and $T^{k} = \{v = (v_{1}, v_{2},\ldots, v_{k})\in \mathbb{Z}_m^{k} : v_{i} \in T  \}$. Clearly, $|T^k| = (\ell+1)^k= (\frac{m}{2t+1})^k$.
\begin{lemma}\label{translation of B}
	Let  $2t+1 \mid m-2t-1$ and $\ell = \frac{m-2t-1}{2t+1}$. Then the following hold: 
	\begin{enumerate}[label=(\roman*)]
		\item For each  $v \in T^{k} \setminus \{0\}$,  $S^k$  and $v + S^k$ are disjoint.
		\item\label{l32} For $v \neq v' \in  T^{k} \setminus \{0\}$, the translates $v  + S^k$ and $ v' + S^k$ are disjoint. Moreover,
		$ \dot{\bigcup\limits_{v \in T^{k} }}	v + S^k = \mathbb{Z}_m^{k}$.
		\item  $\langle v= (v_{1}, v_{2},\ldots, v_{k}) \rangle$ is a submodule of $\mathbb{Z}_m ^{k}$, where $v_{j} \in  \langle 2t+1 \rangle$ for all $1 \leq j \leq k$.
	\end{enumerate}
	Hence, if $2t+1 \mid m-2t-1$ then the translates of $S^k$ form a partition of $\mathbb{Z}_m ^{k}$.
\end{lemma}
\begin{proof}
	\begin{enumerate}[label=(\roman*)]
		\item  Let $v  \in T^{k} \setminus \{0\}$ and $x \in S^k$. Suppose that $x_j$ and $v_j $ be the $j^{th}$ coordinate of $x$ and $v$ respectively. Since $v_j \in T$, from Lemma \ref{Translation of S} \ref{2a},  $v_{j}+x_{j} \in S'$ for any $x_j $ in $S$. Thus,  $v + x \notin S^k$ for any $x \in S^k$. Hence,  $S^k$  and $v + S^k$ are disjoint.
		\item  Suppose that the translates $v + S^k$ and $ v' + S^k$ are not disjoint for some $v \neq v' \in  T^{k} \setminus \{0\}$. Now, $v$ and $v'$ will differ in at least one position say $i_0$ where $v_{i_0} = j(2t+1) \neq v'_{i_0}  = p(2t+1)$  for some  $j \neq p$. Then, there exist  a $u \in (v + S^k)  \cap  (v' + S^k)$, so that $ u = v + x $ and $ u = v' + y$ for some $x, y \in S^k$. $ (v_{1}, v_{2},\ldots, v_{k}) + (x_{1},x_{2},\ldots,x_{k}) = (v'_{1}, v'_{2},\ldots, v'_{k}) + (y_{1}, y_{2},\ldots,y_{k})$ implies that  $ v_{i} + x_{i} =  v'_{i} + y_{i} $ for all  $1 \leq i \leq k$. Then,  $v_{i_0} + x_{i_0} \in v'_{i_0} + S$  and $v'_{i_0} + y_{i_0} \in v_{i_0} + S$. Thus, $j(2t+1) + S $ and $p(2t+1) + S $ are not disjoint which contradicts the Lemma \ref{Translation of S} \ref{2b}.
		\item Let $H= \langle v= (v_{1}, v_{2},\ldots, v_{k}) \rangle$ where $v_{j} \in  \langle 2t+1 \rangle$ for all $1 \leq j \leq k$. From Lemma \ref{Translation of S} \ref{2c}, we have $\langle 2t+1 \rangle$ is a submodule of $\mathbb{Z}_m$.  Consider two vectors $u , w \in H $ where $u+w=(u_{1}, u_{2},\ldots, u_{k}) + (w_{1}, w_{2},\ldots, w_{k})=(u_{1}+w_{1}, u_{2}+w_{2},\ldots, u_{k}+w_{k}) $. Since $u_j , w_j \in \langle 2t+1 \rangle$, $u_j + w_j \in \langle 2t+1 \rangle$ $\forall$ $j$ and thus, $u+w \in H$. For $a \in \mathbb{Z}_m$,  $au_j \in \langle 2t+1 \rangle$ and $au \in H$. Hence, $H$ is a submodule of $\mathbb{Z}_m^k$.
	\end{enumerate}
\end{proof}
\begin{remark}
	If $2t+1 \nmid m-2t-1$ then the translates of $S^k$ will not form a partition of $\mathbb{Z}_m^{k}$.
\end{remark}
  Now, we will find the $I$-perfect pomset block code $\mathbb{C}$ of length $N$ for an ideal $I$ with a partial count. 
  Refer the notations $A_p$ and $A_f$ in Section \ref{section33} with regard to the ideal $I$. We denote the partial count of each $i \in A_p$ in $I$ as $t_{i}$. Clearly, $1 \leq t_{i_s}   \leq  \lfloor \frac{m}{2}  \rfloor -1$ for $1 \leq s \leq  I_p$.  Suppose that $2t_i+1 \mid m-2t_i-1$ for an $i \in \{i_1, i_2, \ldots , i_{I_p}\}$ and $ \ell_i = \frac{m - 2t_i -1 }{2t_i+1}$. The foregoing results connected with $S'^k$ and $T^k$ for a specified $t$, can now be used for each $t_i$.
  
  \par Let  $D_i = \{(\bar{0},\ldots,\bar{0},v_i,\bar{0},\ldots,\bar{0}) \in \mathbb{Z}_m^N :  v_i \in T^{k_i} \}$. 
   Let $v=(\bar{0},\ldots,\bar{0},v_i,\bar{0},\ldots,\bar{0})$ and $v'=(\bar{0},\ldots,\bar{0},v'_i,\bar{0},\ldots,\bar{0}) $ be the two distinct $N$-tuples in $D_i$. Then $B_I(v)=v+B_I$ and $B_I(v')=v'+B_I$ where 
  	\begin{equation*}
  	B_I = \left\{
  	\begin{array}{ll}
  		x_{1} \oplus x_{2} \oplus \ldots \oplus x_{n}  \in \mathbb{Z}_{m}^N  :   x_i = \left\{ 
  		\begin{array}{ll}
  			x_i \in \mathbb{Z}_m^{k_i}, & \text{for} ~i \in I^*-A_p \\
  			x_i \in S^{k_i}, &  \text{for}  \ i \in A_p \\
  			\bar{0}, &  \text{for}  \ i \in [n]-I^*
  		\end{array}	\right\} \end{array} \right\}
  \end{equation*}
  Since $v_i \neq v'_i$, $v_i+S^{k_i}$ and $v'_i+S^{k_i}$ are disjoint by  Lemma \ref{translation of B} \ref{l32}. Thus, the $I$-balls centered at distinct $N$-tuples $v$ and $v'$ in $D_i$ are disjoint.
  \par 
  This allows one to construct a set of $N$-tuples such that the $I$-balls centered at vectors in this set are pairwise disjoint. For this, the ideal $I$ with a partial count must be such that   $2t_{i} + 1 \mid m-2t_{i}-1$ for each $i \in A_p =\{i_1, i_2, \ldots , i_{I_p}\} $. Now, define  
 $D_{A_p} = D_{i_1}  \oplus D_{i_2} \oplus \ldots \oplus D_{i_{I_p}} $
  	and $D_{[n]-I^*} = \{v_{1}  \oplus v_{2} \oplus \ldots \oplus v_{n} \in \mathbb{Z}_{m}^N : v_i = 0 \text{ when } i \in I^* \} $.
  	Note that, for a $v \in D_{A_p}$, $v_i = 0 ~\forall~ i \notin A_p $ and for a  $v \in D_{[n]-I^*}$, $\ v_i = 0$ $\forall$ $i \notin [n] - I^* $.
  \par 
    As $I$-balls centered at the vectors in $D_i$ are disjoint, the $I$-balls centered at the vectors in $ D_{A_p}$ are also  disjoint. 
   \begin{theorem}\label{D_I are disjoint } 
   	Let $I$ be an ideal with partial count and $t_j$ be the partial count of $j$ in $I $. If $2t_{j} + 1 \mid m-2t_{j}-1$ for each $j \in A_p=\{i_1, i_2, \ldots , i_{I_p}\}  $, then the $I$-balls centered at the $N$-tuples of  $	D_{A_p} =$
   		\begin{equation*}
   	 \left\{
   		\begin{array}{ll}
   			v=v_{1}  \oplus v_{2} \oplus \ldots \oplus v_{n} \in \mathbb{Z}_{m}^N  : v_{i} \in \mathbb{Z}_m^{k_i} \ \text{and}  \ v_i  = \left\{ 
   			\begin{array}{ll}
   				0, & \text{for} ~i \notin A_p \\
   				v_i \in T^{k_i}, &  \text{for}  \ i \in A_p 
   			\end{array}	\right\} \end{array} \right\}
   	\end{equation*}
   	   are disjoint.
    Moreover, $ |D_{A_p}| = \prod\limits_{j \in A_p}(1 + \ell_j)^{k_{j}} 	$ where $\ell_j = \frac{m-2t_j-1}{2t_j+1}$.
   \end{theorem}
 \par Now, we can construct the $I$-perfect pomset block code by considering the direct sum of $D_{A_p}$ and $D_{[n]-I^*}$.
\begin{theorem}\label{I-perfect partial pomset code } 
	Let $I$ be an ideal with partial count and $t_j$ be the partial count of $j$ in $I $. If $2t_{j} + 1 \mid m-2t_{j}-1$ for each $j \in A_p =\{i_1, i_2, \ldots , i_{I_p}\} $, then 
	$D =$ 
	\begin{equation*}
  \small{ \left\{
		\begin{array}{ll}
		v=v_{1}  \oplus v_{2} \oplus \ldots \oplus v_{n} \in \mathbb{Z}_{m}^N  : v_{i} \in \mathbb{Z}_m^{k_i} \ \text{and}  \ v_i  = \left\{ 
			\begin{array}{ll}
				0, & \text{for} ~i \in I^*-A_p \\
				v_i \in T^{k_i}, &  \text{for}  \ i \in A_p \\
				v_i \in \mathbb{Z}_m^{k_i}, &  \text{for}  \ i \in [n]-I^*
			\end{array}	\right\} \end{array} \right\} }
	\end{equation*}
	is an $I$-perfect $(Pm,\pi)$-code of length $N$ over $\mathbb{Z}_m$. Moreover, $ |D| = m^{\sum\limits_{i \in {[n]-I^*}} k_{i} } \prod\limits_{j \in A_p}(1 + \ell_{j})^{k_{j}} $ where $\ell_j = \frac{m-2t_j-1}{2t_j+1}$.
\end{theorem}
\section{MDS and $I$-Perfect Codes with Chain Pomset}
 Throughout this section, $\mathbb{P}=(M, R)$ is considered to be a chain. Then  $|\mathscr{I}^{r}(\mathbb{P})| =1$ for each $ r \leq n \lfloor\frac{ m }{2} \rfloor$. For the ideal $I \in \mathscr{I}^r (\mathbb{P})$, $B_I (x) = B_{(Pm,\pi)}(x, r)$ for any $x \in \mathbb{Z}_m^N$. Moreover, the poset $P=([n],\preceq)$ induced by the pomset $\mathbb{P}$ is also a chain. Each ideal in $\mathbb{P}$ or in $\mathbb{P}$ has a  unique maximal element. Let $ c_{i_t} / i_t $ be the maximal element of  $ \langle supp_{(Pm,\pi)}(x) \rangle$. Then, 
  \begin{equation*}
  	w_{(Pm,\pi)}(x)= c_{i_t} + (| \langle {c_{i_t}/i_t} \rangle ^*|-1) \lfloor\frac{ m }{2} \rfloor.
  \end{equation*}
\par In this section, the Singleton bound and the packing radius for pomset block codes are obtained. Furthermore, we look into the connection between MDS pomset block codes and $I$-perfect ($r$-perfect) pomset block codes. 
\begin{lemma}
	Let $\mathbb{P}$ be a chain pomset. Then a code  $\mathbb{C} \subseteq \mathbb{Z}_m^N $ is an $r$-perfect $(Pm, \pi)$-code if and only if $\mathbb{C}$ is an $I$-perfect $(Pm, \pi)$-code for the ideal $I \in \mathscr{I}^r (\mathbb{P})$.
\end{lemma}
Let  $r = t \lfloor\frac{ m }{2}\rfloor+s$ where $s \in  \{1,2,\ldots, \lfloor\frac{ m }{2}\rfloor\}$ and $t \geq 0$. Let $B_{(Pm,\pi)} (x,r)$ and $B_{(P,\pi)} (x,r)$ denote the  $r$-balls centered at $x$ having radius $r$ with respect to the $(Pm,\pi)$-metric and $(P,\pi)$-metric respectively. If $y \in B_{(P,\pi)} (x,r)$, then $w_{(Pm,\pi)}(x-y) \leq t\lfloor\frac{ m }{2}\rfloor + s$ and $w_{(P,\pi)}(x-y) = |\langle supp_{\pi} (x-y) \rangle| \leq t+1$. Hence, $y \in B_{(P,\pi)} (x,t+1)$. Thus, we have
\begin{theorem}\label{Packing radius pwpi code lemma}
	Let $ x \in \mathbb{Z}_{m}^{N} $ and $w$ be a weight on $\mathbb{Z}_{m}$. If $r = t \lfloor\frac{ m }{2}\rfloor + s$,  $1 \leq s \leq   \lfloor\frac{ m }{2}\rfloor $ and $t \geq 0$,   then $B_{(P,\pi)} (x,r) \subseteq B_{(P,\pi)} (x,t+1)$. Furthermore, $B_{(P,\pi)} (x,r) = B_{(P,\pi)} (x,t)$ iff $r = t \lfloor\frac{ m }{2}\rfloor $.  
\end{theorem}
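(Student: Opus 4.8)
The plan is to push everything through the explicit description of the pomset block weight on a chain recorded at the start of this section. First I would fix a nonzero $v\in\mathbb{Z}_m^N$, put $I_v^{\mathbb{P}}=\langle supp_{(Pm,\pi)}(v)\rangle$, and observe that since $\mathbb{P}$ is a chain this ideal equals $\langle c/i_0\rangle$, where $i_0$ is the largest index with $v_{i_0}\neq0$ and $1\le c=\widetilde{w}_L(v_{i_0})\le\lfloor m/2\rfloor$; every root element of $I_v^{\mathbb{P}}$ other than $i_0$ carries the full count $\lfloor m/2\rfloor$, and (unwinding Definition \ref{poset induced}) the root set of $I_v^{\mathbb{P}}$ is exactly the $P$-ideal generated by $supp_\pi(v)$, so $|(I_v^{\mathbb{P}})^*|=w_{(P,\pi)}(v)$. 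Writing $s=w_{(P,\pi)}(v)$, this yields the identity
\[
w_{(Pm,\pi)}(v)=(s-1)\Big\lfloor\frac{m}{2}\Big\rfloor+c,\qquad 1\le c\le\Big\lfloor\frac{m}{2}\Big\rfloor,
\]
and hence the estimate $(s-1)\lfloor m/2\rfloor<w_{(Pm,\pi)}(v)\le s\lfloor m/2\rfloor$ for $v\neq0$, with both weights vanishing at $v=0$. The whole proof then comes down to reading $s$ off from these two inequalities.

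For the inclusion I would take $y\in B_{(Pm,\pi)}(x,r)$, set $v=x-y$, and assume $v\neq0$. Then $(s-1)\lfloor m/2\rfloor<w_{(Pm,\pi)}(v)\le r=t\lfloor m/2\rfloor+k\le(t+1)\lfloor m/2\rfloor$, and since $\lfloor m/2\rfloor\ge1$ this forces $s-1\le t$, i.e. $w_{(P,\pi)}(x-y)=s\le t+1$, so $y\in B_{(P,\pi)}(x,t+1)$. This settles $B_{(Pm,\pi)}(x,r)\subseteq B_{(P,\pi)}(x,t+1)$.

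For the equality part I would first note that, with $1\le k\le\lfloor m/2\rfloor$, the radius $r=t\lfloor m/2\rfloor+k$ is a multiple of $\lfloor m/2\rfloor$ precisely when $k=\lfloor m/2\rfloor$, i.e. $r=(t+1)\lfloor m/2\rfloor$. If this holds, then for $y\in B_{(P,\pi)}(x,t+1)$ and $v=x-y\neq0$ with $s=w_{(P,\pi)}(v)\le t+1$ the identity gives $w_{(Pm,\pi)}(v)\le s\lfloor m/2\rfloor\le(t+1)\lfloor m/2\rfloor=r$, so $y\in B_{(Pm,\pi)}(x,r)$ and, combined with the inclusion above, the two balls coincide. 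For the converse I would argue contrapositively: if $r$ is not a multiple of $\lfloor m/2\rfloor$ then $1\le k\le\lfloor m/2\rfloor-1$, and since $r\le n\lfloor m/2\rfloor$ is implicit in the statement we have $t\le n-1$; I would then take the vector $v$ that is zero in every block except the one indexed by the $(t+1)$-st element $a$ of the chain (so that $|\langle a\rangle^*|=t+1$), where $v_a=(\lfloor m/2\rfloor,0,\ldots,0)$. Then $w_{(P,\pi)}(v)=t+1$ while $w_{(Pm,\pi)}(v)=(t+1)\lfloor m/2\rfloor>t\lfloor m/2\rfloor+k=r$, so $y=x-v$ lies in $B_{(P,\pi)}(x,t+1)\setminus B_{(Pm,\pi)}(x,r)$, making the inclusion strict.

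I do not expect a genuine obstacle here; the argument is pure bookkeeping around the identity $w_{(Pm,\pi)}(v)=(s-1)\lfloor m/2\rfloor+c$. The points that most need care are keeping the strict versus non-strict inequalities consistent, translating ``$r$ is a multiple of $\lfloor m/2\rfloor$'' into the clean condition $k=\lfloor m/2\rfloor$ under the given parametrisation of $r$, and noticing that when $\lfloor m/2\rfloor=1$ (that is, $m\le3$) the converse direction is vacuous, since then every admissible $r$ is automatically a multiple of $\lfloor m/2\rfloor$ and the two balls always coincide.
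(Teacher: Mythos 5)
Your proof is correct and follows essentially the same route as the paper, which reads off $w_{(P,\pi)}(v)$ from the chain identity $w_{(Pm,\pi)}(v)=(s-1)\lfloor m/2\rfloor+c$ with $1\le c\le\lfloor m/2\rfloor$ established at the start of the section. In fact you go further than the paper, which only sketches the inclusion and omits the ``furthermore'' equivalence entirely; your explicit counterexample vector for the non-multiple case and your observations about the edge cases $t\ge n$ and $\lfloor m/2\rfloor=1$ are all sound and fill genuine gaps in the published argument.
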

\begin{definition}
	The packing radius of a  code $\mathbb{C} $ with respect to any metric $d$ is the
	greatest integer $r$ such that the  $r$-balls centered at any two distinct codewords are disjoint. 
\end{definition}
\par  Let $R_{(Pm,\pi)} (\mathbb{C})$  and $R_{(P,\pi)} (\mathbb{C})$ denote the packing radius of the code $\mathbb{C} \subseteq \mathbb{Z}_{m}^{N} $ with respect to the $(Pm,\pi)$-metric and $(P,\pi)$-metric respectively.   Let $R_{(P,\pi)} (\mathbb{C}) = t$ and $R_s = (t-1) \lfloor\frac{ m }{2}\rfloor + s$ where $1 \leq s \leq   \lfloor\frac{ m }{2}\rfloor $. Then, from Theorem \ref{Packing radius pwpi code lemma}, $B_{(Pm,\pi)} (x,R_s) \subseteq B_{(P,\pi)} (x,t)$ for each positive integer
$ s \leq \lfloor\frac{ m }{2}\rfloor$. Hence, $R_{(Pm,\pi)} (\mathbb{C}) \geq R_s $ for every $0 < s \leq \lfloor\frac{ m }{2}\rfloor$.  If $s = \lfloor\frac{ m }{2}\rfloor$, then from Theorem \ref{Packing radius pwpi code lemma},  $ R_{(Pm,\pi)} (\mathbb{C}) = R_{\lfloor\frac{ m }{2}\rfloor} = t \lfloor\frac{ m }{2}\rfloor $. Thus, packing radius of any  $(Pm,\pi)$-block code $ \mathbb{C}$ is $R_{(Pm,\pi)} (\mathbb{C}) = \lfloor\frac{ m }{2}\rfloor R_{(P,\pi)} (\mathbb{C}) $. As seen in \cite{nrt classification} (Theorem 5), $R_{(P,\pi)} (\mathbb{C}) = d_{(P,\pi)} (\mathbb{C}) -1 $ when $P$ is  a chain. Thus, packing radius of any  $(Pm,\pi)$-block code $ \mathbb{C}$ is $R_{(Pm,\pi)} (\mathbb{C}) = \lfloor\frac{ m }{2}\rfloor ( d_{(P,\pi)} (\mathbb{C}) - 1 ) $.
\begin{corollary}[Packing radius]
	Packing radius of a  $(Pm,\pi)$-block code $ \mathbb{C} \subseteq  \mathbb{Z}_{m}^{N} $ is $R_{(Pm,\pi)} (\mathbb{C}) = \lfloor\frac{ m }{2}\rfloor ( d_{(P,\pi)} (\mathbb{C}) - 1 ) $.
\end{corollary}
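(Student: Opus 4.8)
The plan is to reduce the statement, via the chain identity $R_{(P,\pi)}(\mathbb{C}) = d_{(P,\pi)}(\mathbb{C}) - 1$ proved for chain posets in \cite{nrt classification} (Theorem 5), to the equality $R_{(Pm,\pi)}(\mathbb{C}) = \lfloor\tfrac m2\rfloor\, R_{(P,\pi)}(\mathbb{C})$, and to prove the latter by establishing both $R_{(Pm,\pi)}(\mathbb{C}) \ge t\lfloor\tfrac m2\rfloor$ and $R_{(Pm,\pi)}(\mathbb{C}) \le t\lfloor\tfrac m2\rfloor$, where $t := R_{(P,\pi)}(\mathbb{C})$. The only tool needed for the two inequalities is the comparison of $r$-balls in the two metrics given by Theorem \ref{Packing radius pwpi code lemma}, together with the description of $w_{(Pm,\pi)}$ on a chain recalled at the start of this section.

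For the lower bound I would write $t\lfloor\tfrac m2\rfloor = (t-1)\lfloor\tfrac m2\rfloor + \lfloor\tfrac m2\rfloor$ and apply Theorem \ref{Packing radius pwpi code lemma} with its ``$t$'' replaced by $t-1$ and its ``$k$'' by $\lfloor\tfrac m2\rfloor$, obtaining $B_{(Pm,\pi)}(x, t\lfloor\tfrac m2\rfloor) \subseteq B_{(P,\pi)}(x, t)$ for every $x \in \mathbb{Z}_m^N$. Since $t = R_{(P,\pi)}(\mathbb{C})$, the balls $B_{(P,\pi)}(c,t)$, $c \in \mathbb{C}$, are pairwise disjoint, so the smaller balls $B_{(Pm,\pi)}(c,t\lfloor\tfrac m2\rfloor)$ are pairwise disjoint as well, whence $R_{(Pm,\pi)}(\mathbb{C}) \ge t\lfloor\tfrac m2\rfloor$. (More generally, setting $R_s := (t-1)\lfloor\tfrac m2\rfloor + s$ for $1 \le s \le \lfloor\tfrac m2\rfloor$ gives $R_{(Pm,\pi)}(\mathbb{C}) \ge R_s$ for each such $s$, the case $s=\lfloor\tfrac m2\rfloor$ being the one used.)

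For the upper bound it suffices to produce two distinct codewords whose $(Pm,\pi)$-balls of radius $t\lfloor\tfrac m2\rfloor + 1$ intersect. Since $R_{(P,\pi)}(\mathbb{C}) = t$ \emph{exactly}, we have $d_{(P,\pi)}(\mathbb{C}) = t+1$, so there are $c_1 \ne c_2$ in $\mathbb{C}$ with $d_{(P,\pi)}(c_1,c_2) = t+1$; on a chain this means $c_1$ and $c_2$ agree on every block whose index exceeds the (unique) top block $j$ of $\mathrm{supp}_\pi(c_1-c_2)$ and differ in block $j$, with $| \langle c/j\rangle^*| = t+1$. I would then build $z \in \mathbb{Z}_m^N$ agreeing with $c_1,c_2$ above block $j$, chosen below block $j$ so as to stay inside the relevant coset-ball, and in block $j$ chosen within $\widetilde w_L$-distance $1$ of both $(c_1)_j$ and $(c_2)_j$; the chain formula $w_{(Pm,\pi)}(v) = \widetilde w_L(v_{i_t}) + (|\langle c_{i_t}/i_t\rangle^*|-1)\lfloor\tfrac m2\rfloor$ then gives $w_{(Pm,\pi)}(z-c_i) \le t\lfloor\tfrac m2\rfloor + 1$ for $i=1,2$, so $z$ lies in both balls. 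Combining the two inequalities yields $R_{(Pm,\pi)}(\mathbb{C}) = t\lfloor\tfrac m2\rfloor = \lfloor\tfrac m2\rfloor(d_{(P,\pi)}(\mathbb{C}) - 1)$.

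I expect the upper bound to be the main obstacle. The containments in Theorem \ref{Packing radius pwpi code lemma} only run one way, and a $(Pm,\pi)$-ball whose radius is not divisible by $\lfloor\tfrac m2\rfloor$ is not a coset of a submodule, so the pairwise intersection of two such balls is governed by a difference set rather than by the ball itself; hence the failure of disjointness at $(Pm,\pi)$-radius $t\lfloor\tfrac m2\rfloor+1$ does not follow formally from the failure at $(P,\pi)$-radius $t+1$, and the interpolation in the top block has to be done by hand. The careful point is to check that a single unit of Lee-weight budget in the top block genuinely suffices to bridge the top-block discrepancy; if it does not for an arbitrary minimum-distance pair, I would first replace $(c_1,c_2)$ by a pair realizing $d_{(P,\pi)}(\mathbb{C})$ whose top-block difference has smallest possible Lee weight and verify that this smallest value is compatible with the bound $t\lfloor\tfrac m2\rfloor+1$.
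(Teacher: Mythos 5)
Your lower bound is exactly the paper's argument: write $t\lfloor\tfrac m2\rfloor=(t-1)\lfloor\tfrac m2\rfloor+\lfloor\tfrac m2\rfloor$, invoke Theorem \ref{Packing radius pwpi code lemma} to get $B_{(Pm,\pi)}(x,t\lfloor\tfrac m2\rfloor)\subseteq B_{(P,\pi)}(x,t)$, and conclude $R_{(Pm,\pi)}(\mathbb{C})\ge t\lfloor\tfrac m2\rfloor$. You are also right that the upper bound is the real issue: the paper disposes of it in one clause (``if $s=\lfloor\tfrac m2\rfloor$, then \dots\ $R_{(Pm,\pi)}(\mathbb{C})=R_{\lfloor m/2\rfloor}=t\lfloor\tfrac m2\rfloor$''), which only uses the coincidence of the two balls at radius $t\lfloor\tfrac m2\rfloor$ and never shows that the $(Pm,\pi)$-balls of radius $t\lfloor\tfrac m2\rfloor+1$ fail to be disjoint. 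Your diagnosis that this does not follow formally from the one-way containments is correct and is precisely where the paper is silent.

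However, your proposed repair cannot be completed, and in fact no repair can: the equality asserted in the corollary is false in general. Your midpoint $z$ must satisfy $\widetilde w_L\bigl(z_j-(c_1)_j\bigr)\le 1$ and $\widetilde w_L\bigl(z_j-(c_2)_j\bigr)\le 1$, which by the triangle inequality forces $\widetilde w_L\bigl((c_1-c_2)_j\bigr)\le 2$; nothing guarantees this, and minimizing over pairs realizing $d_{(P,\pi)}(\mathbb{C})$ does not help. Concretely, take $m=7$, the chain $1\prec 2\prec\cdots\prec n$, and $\mathbb{C}=\{0,c\}$ where $c$ is supported in block $1$ with a single nonzero entry equal to $3$. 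Then $d_{(P,\pi)}(\mathbb{C})=1$, so the corollary claims $R_{(Pm,\pi)}(\mathbb{C})=0$; but $B_1(0)$ and $B_1(c)=c+B_1$ are disjoint, since any $u'-u$ with $u,u'\in B_1$ has all Lee weights at most $2<3=w_L(3)$, whence $R_{(Pm,\pi)}(\mathbb{C})\ge 1$. (In one Lee coordinate the packing radius of $\{0,c\}$ is $\lceil w_L(c)/2\rceil-1$, not $0$.) So the methods available prove only the inequality $R_{(Pm,\pi)}(\mathbb{C})\ge\lfloor\tfrac m2\rfloor\bigl(d_{(P,\pi)}(\mathbb{C})-1\bigr)$; the exact packing radius necessarily involves the Lee weights occurring in the top block of the minimum-distance differences, and your plan to ``verify that this smallest value is compatible with the bound $t\lfloor\tfrac m2\rfloor+1$'' is the step that would fail.
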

 When $P$ is a chain, the Singleton bound \cite{bkdnsr} for a $(P,\pi)$-code $\mathbb{C} $  is $  \sum\limits_{j \in J} k_{j} \leq  N - \lceil log_{m}|\mathbb{C}| \rceil $ where $J$ is an ideal in $P$ with $ |J| = d_{(P,\pi)}(\mathbb{C}) - 1$. Since the poset $P$ is induced by the pomset $\mathbb{P}$, there exist an ideal $I \in \mathbb{P} $ with $ | I | \leq  d_{(Pm,\pi)}(\mathbb{C}) - 1 $ and $|I^*|=\big\lfloor \frac{d_{(Pm,\pi)} \mathbb{(C)}-1}{ \big\lfloor \frac{ m }{2} \big\rfloor }\big\rfloor $. From Proposition \ref{phg}, we have $|I^*| \leq |J|$ and  $I^* \subseteq J$ as $P$ is a chain. Thus,
$ \sum\limits_{i \in I^*} k_{i} \leq     \sum\limits_{j \in J} k_{j} $. Hence, we have:
\begin{theorem}[Singleton bound] \label{chain Singleton bound}
		Let $\mathbb{C}$  be a chain pomset block code of length $N$ over $\mathbb{Z}_m$ with minimum distance $d_{(Pm,\pi)}(\mathbb{C})$.  Then $
	\max\limits_{I}  \sum\limits_{i \in I^*} k_{i} \leq N - \lceil log_{m}|\mathbb{C}| \rceil$ where  the maximum is taken over all ideals $I$ such that $|I| \leq d_{(Pm,\pi)}\mathbb{C}-1$ and $|I^*| = \big\lfloor \frac{d_{(Pm,\pi)} (\mathbb{C})-1} { \big\lfloor \frac{ m }{2} \big\rfloor } \big\rfloor $.
\end{theorem}
\begin{corollary}\label{singbound case}
	Let $\mathbb{C}$  be a pomset block code of length $N$ over $\mathbb{Z}_m$. Then the following holds: 
	\begin{enumerate}[label=(\roman*)]	 
		\item\label{singbound case 1} if $k_i = k$ for all $i \in [n]$ then $\big\lfloor \frac{d_{(Pm,\pi)} \mathbb{(C)}-1}{\big\lfloor\frac{ m }{2}\big \rfloor} \big\rfloor   \leq n- {\frac{\lceil log_{m}|\mathbb{C}|\rceil }{k}}$. 
		\item  if $k_1\geq k_2 \geq \ldots \geq k_n$ then $ {r} k_{n}  \leq N - \lceil log_{m}|\mathbb{C}| \rceil$ and $rk_n \leq \sum\limits_{i=1}^n k_i \leq rk_1$.
		\item  if $k_i = 1$ for all $i \in [n]$, then the Singleton bound for a pomset block code becomes that for a pomset code; that is, $	\big\lfloor \frac{d_{(Pm,\pi)} \mathbb{(C)}-1}{\big\lfloor\frac{ m }{2}\big \rfloor} \big\rfloor  \leq n - \lceil log_{m}|\mathbb{C}| \rceil$.
	\end{enumerate}
\end{corollary}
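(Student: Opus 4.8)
\emph{Proof proposal.} The plan is to obtain all three items as direct specializations of Theorem~\ref{chain Singleton bound}, simplifying the inequality $\sum_{i\in J^*}k_i \le N-\lceil\log_m|\mathbb{C}|\rceil$ according to the choice of label map. Throughout I write $r=\big\lfloor\frac{d_{(Pm,\pi)}(\mathbb{C})-1}{\lfloor m/2\rfloor}\big\rfloor$ and let $J$ be the ideal produced by Theorem~\ref{chain Singleton bound}, so that $|J^*|=r$, $|J|\le d_{(Pm,\pi)}(\mathbb{C})-1$, and $\sum_{i\in J^*}k_i\le N-\lceil\log_m|\mathbb{C}|\rceil$; recall also that $N=k_1+k_2+\cdots+k_n$.

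For (i), assume $\pi(i)=k$ for every $i\in[n]$, so $N=nk$ and $\sum_{i\in J^*}k_i=k\,|J^*|=kr$. Substituting into the Singleton inequality gives $kr\le nk-\lceil\log_m|\mathbb{C}|\rceil$, and dividing by $k$ yields $r\le n-\frac{\lceil\log_m|\mathbb{C}|\rceil}{k}$, which is exactly the claim. Item (iii) is the same computation with $k=1$: then $N=n$ and $\sum_{i\in J^*}k_i=|J^*|=r$, whence $r\le n-\lceil\log_m|\mathbb{C}|\rceil$; that is, the Singleton bound for chain pomset block codes degenerates to the Singleton bound for the (ordinary) chain pomset code.

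For (ii), assume $k_1\ge k_2\ge\cdots\ge k_n$. Since $J^*\subseteq[n]$ has exactly $r$ elements and $k_n=\min_{i\in[n]}k_i$, we get $\sum_{i\in J^*}k_i\ge r k_n$; combining this with Theorem~\ref{chain Singleton bound} gives $r k_n\le N-\lceil\log_m|\mathbb{C}|\rceil$. The trailing chain $r k_n\le\sum_{i=1}^{r}k_i\le r k_1$ is then immediate, as $k_n\le k_i\le k_1$ for every $1\le i\le r\le n$ by the monotonicity of the $k_i$.

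The only subtlety worth flagging is that, in a chain pomset, the ideal $J$ of Theorem~\ref{chain Singleton bound} is pinned down by its unique maximal element (using that $|I_1^t|=1$ for a chain), and its root set $J^*$ need not be an initial segment $\{1,\dots,r\}$ of $[n]$ under the natural index order; this is why in (ii) one bounds $\sum_{i\in J^*}k_i$ below by $r k_n$ through the global minimum of the block lengths rather than by $\sum_{i=1}^{r}k_i$ directly. Beyond this bookkeeping there is no real obstacle: once Theorem~\ref{chain Singleton bound} is available (and, implicitly, Proposition~\ref{phg}, which is what lets us read off $r=|J^*|$), each part is a one-line specialization.
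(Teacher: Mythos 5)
Your proposal is correct and follows exactly the route the paper intends: each item is a direct specialization of Theorem~\ref{chain Singleton bound} (the paper itself gives no separate proof of the corollary, treating it as immediate). Your remark in (ii) that $J^*$ need not be an initial segment, so one must bound $\sum_{i\in J^*}k_i$ below by $rk_n$ via the minimality of $k_n$, is exactly the right bookkeeping point.
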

\begin{definition}
	A pomset block code $\mathbb{C} \subseteq \mathbb{Z}_m^N $ of length $N$ over $\mathbb{Z}_m$ is said to be maximum distance separable (MDS) if it attains its Singleton bound.  
\end{definition}
\begin{theorem}
	Let $k_i = k $  $\forall$ $i \in [n]$ and $\mathbb{C} $  be a $(Pm,\pi)$-code of length $N$  over  $\mathbb{Z}_m$ with minimum distance $d_{(Pm,\pi)}(\mathbb{C})$.  If $\mathbb{C}$ is MDS then $
	{\big\lfloor\frac{ m }{2}\big \rfloor}(n - \frac{\lceil log_{m}|\mathbb{C}| \rceil}{k})    + 1  \leq  d_{(Pm,\pi)} \mathbb{(C)} \leq   {\big\lfloor\frac{ m }{2}\big \rfloor}(n - \frac{\lceil log_{m}|\mathbb{C}| \rceil}{k}+1) $. 
\end{theorem}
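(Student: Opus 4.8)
The plan is to read off the value of $r=\big\lfloor\frac{d_{(Pm,\pi)}(\mathbb{C})-1}{\big\lfloor \frac{m}{2} \big\rfloor}\big\rfloor$ from the chain Singleton bound (Theorem \ref{chain Singleton bound}) combined with the hypothesis that $\mathbb{C}$ is MDS, and then to unwind the floor function to recover the two-sided estimate on $d_{(Pm,\pi)}(\mathbb{C})$.

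First I would specialise Theorem \ref{chain Singleton bound} to the equal-block-length setting. Since $k_i=k$ for every $i\in[n]$ we have $N=nk$, and the ideal $J$ produced by that theorem satisfies $|J^*|=r$, so $\sum_{i\in J^*}k_i=rk$. Hence the Singleton bound $\sum_{i\in J^*}k_i\leq N-\lceil log_{m}|\mathbb{C}|\rceil$ becomes $rk\leq nk-\lceil log_{m}|\mathbb{C}|\rceil$, i.e. $r\leq n-\frac{\lceil log_{m}|\mathbb{C}|\rceil}{k}$. By definition $\mathbb{C}$ is MDS precisely when it attains its Singleton bound, so for an MDS code this inequality is an equality:
\[
	r=n-\frac{\lceil log_{m}|\mathbb{C}|\rceil}{k}
\]
(which in particular forces $k\mid\lceil log_{m}|\mathbb{C}|\rceil$, since $r$ is an integer).

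Next I would use the defining inequalities of the floor, namely $r\leq\frac{d_{(Pm,\pi)}(\mathbb{C})-1}{\big\lfloor \frac{m}{2} \big\rfloor}<r+1$, and multiply through by $\big\lfloor \frac{m}{2} \big\rfloor\geq1$ to obtain $r\big\lfloor \frac{m}{2} \big\rfloor\leq d_{(Pm,\pi)}(\mathbb{C})-1<(r+1)\big\lfloor \frac{m}{2} \big\rfloor$. As $d_{(Pm,\pi)}(\mathbb{C})$ is an integer this yields $r\big\lfloor \frac{m}{2} \big\rfloor+1\leq d_{(Pm,\pi)}(\mathbb{C})\leq(r+1)\big\lfloor \frac{m}{2} \big\rfloor$, and substituting $r=n-\frac{\lceil log_{m}|\mathbb{C}|\rceil}{k}$ gives exactly
\[
	{\big\lfloor \frac{m}{2} \big\rfloor}\Bigl(n-\frac{\lceil log_{m}|\mathbb{C}|\rceil}{k}\Bigr)+1\ \leq\ d_{(Pm,\pi)}(\mathbb{C})\ \leq\ {\big\lfloor \frac{m}{2} \big\rfloor}\Bigl(n-\frac{\lceil log_{m}|\mathbb{C}|\rceil}{k}+1\Bigr),
\]
which is the asserted chain of inequalities.

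The whole argument is essentially bookkeeping once Theorem \ref{chain Singleton bound} is available; the only delicate point will be extracting the \emph{equality} $r=n-\frac{\lceil log_{m}|\mathbb{C}|\rceil}{k}$ from the MDS hypothesis (and not merely an inequality), together with keeping the strict versus non-strict inequalities straight when passing from $d_{(Pm,\pi)}(\mathbb{C})-1$ under ordinary division back to the integer $d_{(Pm,\pi)}(\mathbb{C})$. I do not expect any genuine obstacle beyond that.
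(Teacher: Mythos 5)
Your proposal is correct and follows essentially the same route as the paper: the MDS hypothesis with equal block lengths forces $\big\lfloor \frac{d_{(Pm,\pi)}(\mathbb{C})-1}{\lfloor m/2 \rfloor}\big\rfloor = n - \frac{\lceil \log_{m}|\mathbb{C}|\rceil}{k}$, and the two-sided bound then follows by unwinding the floor and using the integrality of $d_{(Pm,\pi)}(\mathbb{C})$. The only difference is that you make explicit the derivation of this equality from Theorem \ref{chain Singleton bound}, which the paper simply asserts.
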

\begin{proof}
	Since $\mathbb{C}$ is  MDS  and $k_i = k $  $\forall$ $i \in [n]$, we have  
	 $	\big\lfloor \frac{d_{(Pm,\pi)} \mathbb{(C)}-1}{\big\lfloor\frac{ m }{2}\big \rfloor} \big\rfloor  = n  - \frac{\lceil log_{m}|\mathbb{C}| \rceil}{k} $. Thus, 	$	n - \frac{\lceil log_{m}|\mathbb{C}| \rceil}{k} \leq  \frac{d_{(Pm,\pi)} \mathbb{(C)}-1}{\big\lfloor\frac{ m }{2}\big \rfloor} < n - \frac{\lceil log_{m}|\mathbb{C}| \rceil}{k} +1 $. Hence, $
	{\big\lfloor\frac{ m }{2}\big \rfloor}(n - \frac{\lceil log_{m}|\mathbb{C}| \rceil}{k})    + 1 \leq  d_{(Pm,\pi)} \mathbb{(C)} \leq   {\big\lfloor\frac{ m }{2}\big \rfloor}(n - \frac{\lceil log_{m}|\mathbb{C}| \rceil}{k}+1)$.		
\end{proof}
\par Therefore, if $k_i = k $  $\forall$ $i \in [n]$ and  $\mathbb{C} $ is a $(Pm,\pi)$-code of length $N$ over $\mathbb{Z}_m$ with minimum distance $d_{(Pm,\pi)}(\mathbb{C})$, then $\mathbb{C}$ cannot be an MDS pomset code  whenever $	1 \leq d_{(Pm,\pi)} \mathbb{(C)} \leq 	{\big\lfloor\frac{ m }{2}\big \rfloor}(n - \frac{\lceil log_{m}|\mathbb{C}| \rceil}{k})  $ or $ d_{(Pm,\pi)} (\mathbb{C})  > {\big\lfloor\frac{ m }{2}\big \rfloor}(n - \frac{\lceil log_{m}|\mathbb{C}| \rceil}{k}+1) $.
	\par 
Now we will examine the maximum distance separability of codes with respect to $(Pm,\pi)$-metric and $(P,\pi)$-metric.
\begin{theorem}
	If $\mathbb{C} $  is an MDS code with respect to $(Pm,\pi)$-metric then $\mathbb{C} $  is an MDS code with respect to $(P,\pi)$-metric.
	\begin{proof}
	Suppose that $\mathbb{C} $  is  MDS  with respect to $(Pm,\pi)$-metric. Then $ \sum_{i \in I^*} k_{i}  = N - \lceil log_{m}|\mathbb{C}| \rceil $, where $| I^*|= \big\lfloor \frac{d_{(Pm,\pi)} \mathbb{(C)}-1}{\lfloor \frac{m}{2} \rfloor} \big\rfloor$. Since $ |I^*| \leq d_{(P,\pi)}(\mathbb{C}) - 1 $ by  Proposition \ref{phg}, we have 
	$ \sum_{i \in I^*} k_{i} \leq \sum_{i \in J} k_{i}$ for an ideal $J$ in $P$ with $|J|= d_{(P,\pi)}(\mathbb{C}) - 1 $ as $I^* \subseteq J$.  Thus,  $  N - \lceil log_{m}|\mathbb{C}| \rceil \leq  \sum_{i \in J} k_{i} $.
		Hence $\mathbb{C} $ is MDS with respect to $(P,\pi)$-metric.
	\end{proof}
\end{theorem}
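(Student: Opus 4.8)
The plan is to obtain MDS-ness for the $(P,\pi)$-metric from MDS-ness for the $(Pm,\pi)$-metric by pinching a chain of inequalities that is, up to bookkeeping, already available from Proposition~\ref{phg} and Theorem~\ref{chain Singleton bound}. First I would spell out what the hypothesis gives: since $\mathbb{C}$ is MDS with respect to the $(Pm,\pi)$-metric, it attains the Singleton bound of Theorem~\ref{chain Singleton bound}, so, writing $r=\big\lfloor\frac{d_{(Pm,\pi)}(\mathbb{C})-1}{\lfloor m/2\rfloor}\big\rfloor$, there is an ideal $J$ of $\mathbb{P}$ with $|J^*|=r$ and $\sum_{i\in J^*}k_i=N-\lceil\log_m|\mathbb{C}|\rceil$.

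Next I would bring in the chain structure. By Proposition~\ref{phg} we have $r\leq d_{(P,\pi)}(\mathbb{C})-1$, and since $P$ is a chain its ideals are initial segments of $[n]$, hence totally ordered by inclusion with a unique ideal of each admissible cardinality. Therefore the root set $J^*$ of size $r$ is contained, as a subset of $[n]$, in the unique ideal $J'$ of size $d_{(P,\pi)}(\mathbb{C})-1$ that appears in the $(P,\pi)$-Singleton bound of \cite{bkdnsr}. Consequently $\sum_{i\in J^*}k_i\leq\sum_{i\in J'}k_i$.

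Finally I would combine the two facts. The $(P,\pi)$-Singleton bound gives $\sum_{i\in J'}k_i\leq N-\lceil\log_m|\mathbb{C}|\rceil$, so, using the equality from the first step, $N-\lceil\log_m|\mathbb{C}|\rceil=\sum_{i\in J^*}k_i\leq\sum_{i\in J'}k_i\leq N-\lceil\log_m|\mathbb{C}|\rceil$, which forces all inequalities to be equalities. In particular $\sum_{i\in J'}k_i=N-\lceil\log_m|\mathbb{C}|\rceil$, i.e. $\mathbb{C}$ meets its $(P,\pi)$-Singleton bound, so $\mathbb{C}$ is MDS with respect to the $(P,\pi)$-metric.

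The argument is short once the earlier results are in hand; the only point that needs care is the inclusion $J^*\subseteq J'$, which is where the chain hypothesis is genuinely used. It rests on the ideals of a chain poset being nested initial segments (so an ideal is determined by its cardinality) together with the inequality $r\leq d_{(P,\pi)}(\mathbb{C})-1\leq n-1$, guaranteeing that an initial segment of size $r$ actually exists and sits inside $J'$. I expect this comparison of ideal-cardinalities to be the main—though modest—obstacle; everything else is a direct assembly of Proposition~\ref{phg}, Theorem~\ref{chain Singleton bound}, and the poset-block Singleton bound.
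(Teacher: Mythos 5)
Your proposal is correct and follows essentially the same route as the paper: use MDS-ness in the $(Pm,\pi)$-metric to get $\sum_{i\in J^*}k_i=N-\lceil\log_m|\mathbb{C}|\rceil$ with $|J^*|=r$, invoke Proposition~\ref{phg} and the nestedness of chain ideals to get $\sum_{i\in J^*}k_i\leq\sum_{i\in J'}k_i$, and pinch against the $(P,\pi)$-Singleton bound. The only difference is that you spell out explicitly the inclusion $J^*\subseteq J'$ and the final squeeze to equality, which the paper leaves implicit.
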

  \begin{theorem}\label{fullcount MDS}
  	Let $k_i =k$ for each $i \in [n]$ and $\mathbb{C} $ be a $(Pm, \pi)$-code of length $N$ over $\mathbb{Z}_m$ with cardinality $m^s$ for some $s>0$. If $\mathbb{C} $ is  MDS then $\mathbb{C} $ is $I$-perfect for all $I \in \mathscr{I}^{\lfloor \frac{m}{2} \rfloor (n-\frac{s}{k} )} (\mathbb{P})$.
  \end{theorem}
\begin{proof}
	Let  $\mathbb{C} $ be an MDS $(Pm, \pi)$-code. As  $\mathbb{P}$ is a chain, $ | \mathscr{I}^{\lfloor \frac{m}{2} \rfloor (n-\frac{s}{k} )} (\mathbb{P}) | = 1$ and so $I \in \mathscr{I}^{\lfloor \frac{m}{2} \rfloor (n-\frac{s}{k} )} (\mathbb{P})$ must be an ideal with full count with $I^*= n-\frac{s}{k} $. As $\mathbb{Z}_m^N$ can be partitioned into $I$-balls, let $\ell$ be the number of $I$-balls in this partition. Then $\ell |B_I| = m^N$ which gives  $\ell = |\mathbb{C}|$ as $|B_I| = m ^{k |I^*|}$. Since $\mathbb{C} $ is  MDS,  $	\big\lfloor \frac{d_{(Pm,\pi)} \mathbb{(C)}-1}{\big\lfloor\frac{ m }{2}\big \rfloor} \big\rfloor  = n  - \frac{s}{k} $ by Corollary \ref{singbound case} \ref{singbound case 1} and thus $d_{(Pm,\pi)} \mathbb{(C)} > |I|$. Therefore, any two $I$-balls centered at distinct codewords of $\mathbb{C}$ must be disjoint and $|\mathbb{C}||B_I|= m^N$. Hence $\mathbb{C} $ is $I$-perfect.
\end{proof}
\begin{theorem} \label{conversefullcountmds}
	Let $k_i =k$ for each $i \in [n]$ and $\mathbb{C} \subseteq \mathbb{Z}_m^N $ be a $(Pm,\pi)$-code of length $N$ over $\mathbb{Z}_m$. If $\mathbb{C}$ is $I$-perfect for the ideal $I \in \mathscr{I}^{\lfloor \frac{m}{2} \rfloor (n-\frac{\lceil log_{m}|\mathbb{C}| \rceil}{k})} (\mathbb{P})$  then $\mathbb{C}$ is MDS. 
\end{theorem}
\begin{proof}
	If $\mathbb{C}$ is  $I$-perfect for the ideal $I \in \mathscr{I}^{\lfloor \frac{m}{2} \rfloor (n-\frac{\lceil log_{m}|\mathbb{C}| \rceil }{k} )} (\mathbb{P})$, then $\lfloor \frac{m}{2} \rfloor (n - \frac{\lceil log_{m}|\mathbb{C}| \rceil }{k} )$ is an integer and $ d_{(Pm,\pi)} (\mathbb{C})  > \lfloor \frac{m}{2} \rfloor (n - \frac{\lceil log_{m}|\mathbb{C}| \rceil }{k} ) $.
	Thus,  $  \big\lfloor \frac{d_{(Pm,\pi)} \mathbb{(C)}-1}{\lfloor \frac{m}{2} \rfloor} \big\rfloor \geq  n-\frac{\lceil log_{m}|\mathbb{C}| \rceil }{k} $. Hence $\mathbb{C}$ is MDS.
\end{proof}
\begin{theorem}\label{p11}
  Let $k_i =k$ for each $i \in [n]$. Then every $I$-perfect $(Pm,\pi)$-code over $\mathbb{Z}_m$ is MDS.
\end{theorem}
\begin{proof}
	Let $\mathbb{C}$ be an $I$-perfect  $(Pm,\pi)$-code. Then $d_{(Pm,\pi)}(\mathbb{C}) > |I|$. If  $I$ is an ideal with full count then   $|B_I|= m^{\sum\limits_{i \in I^*}k_i}$, $|\mathbb{C}||B_I| = m^{N}$ and thus, we have $\sum_{i \in I^*}k_i = N - \lceil log_{m}|\mathbb{C}| \rceil$. Hence $\mathbb{C}$ is an MDS block code. Now, if $I$ is an ideal  with partial count then $|I| = |I'| + t $ for some ideal $I'$ with full count and $0 < t \leq \lfloor \frac{m}{2} \rfloor -1$. Since $\mathbb{C}$ is an $I$-perfect code, $ |\mathbb{C}||B_I| = m^{N} $ which means $	|\mathbb{C}|(2t + 1)^{k}|B_{I'}| = m^{nk}$. Taking $log$ to the base $m$ both sides we get,
	$	\sum\limits_{j \in I'^*}k_{j} = nk - \log_m|\mathbb{C}| - k \log_m(2t+1)$ so that 	$	 |I'^*| = n - \frac{\log_m|\mathbb{C}|}{k} -  \log_m(2t+1)$. 
	Since $0 < t \leq \lfloor \frac{m}{2} \rfloor -1 $, we get $ 0 < \log_m (2t+1) <1$. Since $\log_m|\mathbb{C}| + k \log_m(2t+1)$ is an integer, then $\frac{\log_m|\mathbb{C}|}{k} +  \log_m(2t+1) = \lfloor \frac{\log_m|\mathbb{C}|}{k} +  \log_m(2t+1) \rfloor \leq  \lfloor \frac{\log_m|\mathbb{C}|}{k} \rfloor + \lfloor  \log_m(2t+1) \rfloor +1  =  \lfloor \frac{\log_m|\mathbb{C}|}{k} \rfloor +1 =  \lceil  \frac{\log_m|\mathbb{C}|}{k}  \rceil $. 
	As $d_{(Pm,\pi)}(\mathbb{C}) > |I|= |I'| + t$, we have $  \frac{d_{(Pm,\pi)} \mathbb{(C)}-1}{\lfloor \frac{m}{2} \rfloor}  > n-\frac{log_{m}|\mathbb{C}| }{k} -  \log_m(2t+1) + \frac{t-1}{\lfloor \frac{m}{2} \rfloor} \geq n-\frac{ log_{m}|\mathbb{C}|  }{k} -  \log_m(2t+1) \geq n- \lceil  \frac{\log_m|\mathbb{C}|}{k}  \rceil$. Thus,  $ \big\lfloor \frac{d_{(Pm,\pi)} \mathbb{(C)}-1}{\lfloor \frac{m}{2} \rfloor} \big\rfloor \geq  n- \lceil  \frac{\log_m|\mathbb{C}|}{k}  \rceil $.
	Hence, $\mathbb{C}$ is MDS. 
\end{proof}
For a chain pomset, the following is the duality theorem of an MDS $(Pm, \pi)$-code when all the blocks are of same length. 
\begin{theorem}[Duality Theorem]
 Let  $\tilde{\mathbb{P}}$ be the dual pomset of the chain $\mathbb{P}$ on $M$ and $\pi$ be a labeling of $[n]$ with $k_1=k_2=\ldots=k_n=k$. Let $\mathbb{C}$ be an $(Pm, \pi)$-code of length $N$ over $\mathbb{Z}_m$ with cardinality $m^s$ for some $s>0$. Then the following statements are equivalent:	
	\begin{enumerate}[label=(\roman*)]
		\item $\mathbb{C}$ is an MDS   $\mathbb{P}$-code.
		\item $\mathbb{C}$ is an $I$-perfect $\mathbb{P}$-code for all $I \in \mathscr{I}^{\lfloor \frac{m}{2} \rfloor (n-\frac{s}{k})} (\mathbb{P})$.
		\item $\mathbb{C}^\perp$ is an $I^c$-perfect $\tilde{\mathbb{P}}$-code  for all $I^c \in \mathscr{I}^{\lfloor \frac{m}{2} \rfloor (\frac{s}{k})} (\tilde{\mathbb{P}})$.
		\item $\mathbb{C}^\perp$ is an MDS $\tilde{\mathbb{P}}$-code of cardinality $m^{N-s}$.
	\end{enumerate}
\end{theorem}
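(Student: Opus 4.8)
The plan is to prove the cycle $(i)\Leftrightarrow(ii)\Leftrightarrow(iii)\Leftrightarrow(iv)$, leaning on the structural theorems already established in this section together with the full-count duality of $I$-balls and a short bookkeeping argument about complements of ideals in a chain. Throughout I would use that $|\mathbb{C}|=m^q$ forces $\lceil\log_m|\mathbb{C}|\rceil=q$, that $\mathbb{C}$ is linear (needed in order to speak of $\mathbb{C}^\perp$ and to invoke Theorem~\ref{C I to C I perc code}), and that on a chain the root set of any ideal is an initial segment all of whose non-top elements carry full count; in particular the unique ideal of cardinality $\lfloor\frac{m}{2}\rfloor t$ is always of full count, and for each admissible cardinality there is exactly one ideal, which collapses the ``for all $I$'' quantifiers in $(ii)$ and $(iii)$ to a single ideal.

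First I would dispose of $(i)\Leftrightarrow(ii)$, which is almost immediate: Theorem~\ref{fullcount MDS} gives $(i)\Rightarrow(ii)$ (an MDS $(Pm,\pi)$-code with equal block lengths is $I$-perfect for the full-count ideal of cardinality $\lfloor\frac{m}{2}\rfloor(n-\frac{q}{k})$ with $|I^*|=n-\frac{q}{k}$, equivalently $\sum_{i\in I^*}k_i=N-q$), and Theorem~\ref{t11} gives the converse.

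Next, for $(ii)\Leftrightarrow(iii)$ I would invoke Theorem~\ref{C I to C I perc code}: for the linear code $\mathbb{C}$ and the full-count ideal $I$ in $\mathbb{P}$, $\mathbb{C}$ is $I$-perfect if and only if $\mathbb{C}^\perp$ is $I^c$-perfect, where $I^c$ is an ideal of the dual pomset $\tilde{\mathbb{P}}$ (by the lemma that the order ideals of $\tilde{\mathbb{P}}$ are precisely the complements of those of $\mathbb{P}$). It then remains to check that complementation matches the index data: $|I^c|=n\lfloor\frac{m}{2}\rfloor-|I|=\lfloor\frac{m}{2}\rfloor\cdot\frac{q}{k}$, and $I^{c*}=[n]\setminus I^*$ (since $I$ has full count on $I^*$ and vanishes off it), so $|I^{c*}|=\frac{q}{k}$ and $\sum_{i\in I^{c*}}k_i=q$; this is exactly the side condition appearing in $(iii)$, and since $\tilde{\mathbb{P}}$ is again a chain the ``for all'' quantifier there is again a single ideal. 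Hence $(ii)\Leftrightarrow(iii)$.

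Finally, $(iii)\Leftrightarrow(iv)$ is obtained by applying the already-proved equivalence $(i)\Leftrightarrow(ii)$ to the chain pomset $\tilde{\mathbb{P}}$ and the linear code $\mathbb{C}^\perp$, which has cardinality $m^{N-q}$: replacing $q$ by $N-q$ turns $n-\frac{q}{k}$ into $n-\frac{N-q}{k}=\frac{q}{k}$, so statement $(iii)$ is precisely ``(ii) for $(\tilde{\mathbb{P}},\mathbb{C}^\perp)$'' and $(iv)$ is ``(i) for $(\tilde{\mathbb{P}},\mathbb{C}^\perp)$''; using $\tilde{\tilde{\mathbb{P}}}=\mathbb{P}$ and $(\mathbb{C}^\perp)^\perp=\mathbb{C}$ closes the loop. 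The main obstacle, and the only place any real work is needed, is the bookkeeping in the $(ii)\Leftrightarrow(iii)$ step: one must verify that the complement of a full-count ideal has exactly the cardinality and root set required for the index families in $(ii)$ and $(iii)$ to correspond bijectively, and that the full-count hypothesis of Theorem~\ref{C I to C I perc code} and the linearity of $\mathbb{C}$ are genuinely in force; everything else reduces to quoting Theorems~\ref{fullcount MDS}, \ref{t11} and \ref{C I to C I perc code}.
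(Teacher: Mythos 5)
Your proposal is correct and is exactly the intended argument: the paper states this theorem without giving a proof, and the proof it presupposes is precisely your assembly of Theorems~\ref{fullcount MDS} and \ref{t11} for $(i)\Leftrightarrow(ii)$, Theorem~\ref{C I to C I perc code} together with the complement bookkeeping ($|I^c|=n\lfloor\frac{m}{2}\rfloor-|I|$, $I^{c*}=[n]\setminus I^*$) for $(ii)\Leftrightarrow(iii)$, and the duality relations $\tilde{\tilde{\mathbb{P}}}=\mathbb{P}$, $(\mathbb{C}^\perp)^\perp=\mathbb{C}$, $|\mathbb{C}^\perp|=m^{N-q}$ for $(iii)\Leftrightarrow(iv)$. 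Your side remarks --- that $\mathbb{C}$ must be linear for $\mathbb{C}^\perp$ and Theorem~\ref{C I to C I perc code} to apply, that $\lceil\log_m|\mathbb{C}|\rceil=q$, and that on a chain each quantified family of ideals collapses to a single full-count ideal with the stated root-set size --- correctly supply exactly what the paper leaves implicit.
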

	\begin{proof}
	The proof is straightforward from Theorem \ref{fullcount MDS} and Theorem \ref{conversefullcountmds}.
\end{proof}

\bibliographystyle{amsplain}

\end{document}